\documentclass[journal,twoside,web]{ieeecolor}
\usepackage{generic}
\usepackage{cite}
\usepackage{amsmath,amssymb,amsfonts}
\usepackage{graphicx}
\usepackage{hyperref}
\usepackage{algpseudocode}
\usepackage{textcomp}
\usepackage{xcolor}
\usepackage{stfloats}
\usepackage{booktabs}

\usepackage{enumitem}
\usepackage{arydshln}
\usepackage{textcomp}
\usepackage{multirow}
\usepackage{mathtools}
\usepackage[thinc]{esdiff}
\usepackage{times}
\usepackage{bm}
\usepackage{algorithm}
\usepackage{algpseudocode}
\usepackage{accents}

\usepackage{tikz}

\hypersetup{hypertex=true,
	colorlinks=true,
	linkcolor=blue,
	urlcolor=black,
	anchorcolor=blue,
	citecolor=blue}

\newlength{\dhatheight}
\newcommand{\doublehat}[1]{%
	\settoheight{\dhatheight}{\ensuremath{\hat{#1}}}%
	\addtolength{\dhatheight}{-0.35ex}%
	\hat{\vphantom{\rule{1pt}{\dhatheight}}%
		\smash{\hat{#1}}}}

\newtheorem{assumption}{Assumption}[section]
\newtheorem{definition}{Definition}[section]

\newtheorem{theorem}{Theorem}[section]

\newtheorem{lemma}{Lemma}

\newtheorem{remark}{Remark}

\newcommand\norm[1]{\left\lVert#1\right\rVert}

\def\BibTeX{{\rm B\kern-.05em{\sc i\kern-.025em b}\kern-.08em
		T\kern-.1667em\lower.7ex\hbox{E}\kern-.125emX}}
\markboth{\hskip25pc IEEE TRANSACTIONS AND JOURNALS TEMPLATE}
{Author \MakeLowercase{\textit{et al.}}: Title}
\begin{document}
	\title{Range Space or Null Space: Least-Squares Methods for the Realization Problem}
	\author{Jiabao He, Yueyue Xu, Yue Ju, Cristian R. Rojas and H\r{a}kan Hjalmarsson
		\thanks{Jiabao He, Yue Ju, Cristian R. Rojas and H\r{a}kan Hjalmarsson are with the Division of Decision and Control Systems, School of Electrical Engineering and Computer Science, KTH Royal Institute of Technology, 100 44 Stockholm, Sweden.  (Emails: jiabaoh, yuej, crro, hjalmars@kth.se)}
		\thanks{Yueyue Xu is with Department of Mathematics, KTH Royal Institute of Technology, 100 44 Stockholm, Sweden. (Emails: yueyuex@kth.se)}
		\thanks{This work was supported by VINNOVA Competence Center AdBIOPRO, contract [2016-05181] and by the Swedish Research Council through the research environment NewLEADS (New Directions in Learning Dynamical Systems), contract [2016-06079], and contract 2019-04956.}}
	
	\maketitle
	
	\begin{abstract}
		This contribution revisits the classical approximate realization problem, which involves determining matrices of a state-space model based on estimates of a truncated series of Markov parameters. A Hankel matrix built up by these Markov parameters plays a fundamental role in this problem, leveraging the fact that both its range space and left null space encode critical information about the state-space model. We examine two prototype realization algorithms based on the Hankel matrix: the classical range-space-based (SVD-based) method and the more recent null-space-based method. It is demonstrated that the range-space-based method corresponds to a total least-squares solution, whereas the null-space-based method corresponds to an ordinary least-squares solution. By analyzing the differences in sensitivity of the two algorithms, we determine the conditions when one or the other realization algorithm is to be preferred, and identify factors that contribute to an ill-conditioned realization problem. Furthermore, recognizing that both methods are suboptimal, we argue that the optimal realization is obtained through a weighted least-squares approach. A statistical analysis of these methods, including their consistency and asymptotic normality is also provided.
	\end{abstract}
	
	\begin{IEEEkeywords}
		approximation realization, subspace identification, singular value decomposition, least-squares
	\end{IEEEkeywords}
	
	\section{Introduction} \label{Sct1}
	
	Consider the following discrete-time linear time-invariant (LTI) system:
	\begin{subequations} \label{E1}
		\begin{align}
			x_{k + 1} &= Ax_{k} + Bu_{k}, \label{E1a}\\
			y_{k} &= Cx_{k}, \label{E1b}		
		\end{align}
	\end{subequations}
	where $x_{k}\in \mathbb{R}^{n_x}$, $u_{k}\in \mathbb{R}^{n_u}$ and $y_{k}\in \mathbb{R}^{n_y}$ are the system state, input and output, respectively. We make the following assumption:
	
	\begin{assumption} [System]\label{Asp1}
		System \eqref{E1} is stable and minimal, i.e., the spectral radius of $A$ satisfies $\rho(A) < 1$, and $(A,B)$ is controllable and $(A,C)$ is observable.
	\end{assumption}
	
	This work revisits the classical approximate realization problem: given estimates of the first $n$ Markov parameters $\left\{g_i = CA^{i}B\right\}$, where $i = 0,1,\cdots,n-1$, we aim to determine a realization of system matrices $A\in \mathbb{R}^{n_x\times n_x}$, $B\in \mathbb{R}^{n_x\times n_u}$ and $C\in \mathbb{R}^{n_y\times n_x}$, up to a similarity transform. To simplify the presentation, the main body of this work focuses on single-input-single-output (SISO) systems, i.e., $n_x=n_y=1$. Extensions to multi-input-multi-output (MIMO) systems are discussed in Section \ref{Sct6}.
	
	The following Hankel matrix plays a fundamental role in the realization problem\footnote{While other matrix structures, such as the Page matrix \cite{Van1983approximate}, can also be used to obtain a realization of the system matrices, we focus exclusively on the Hankel matrix.}:
	\begin{equation} \label{E2}
		\mathcal{H}_{fp} = \begin{bmatrix}
			g_0 & g_1 & \cdots & g_p \\
			g_1 & g_2 & \cdots & g_{p+1} \\
			\vdots & \vdots & \ddots & \vdots \\
			g_f & g_{f+1} & \cdots & g_{n-1}
		\end{bmatrix}\in \mathbb{R}^{(f+1)\times(p+1)},
	\end{equation}
	where the number of row and column\footnote{In the literature of subspace identification, the number of row and column refer to the future horizon and past horizon, respectively.} satisfy $f+p+1=n$, and $f,p \geq n_x$. It is well known that $\mathcal{H}_{fp}$ can be factorized as
	\begin{equation} \label{E3}
		\mathcal{H}_{fp} = \mathcal{O}_{f}\mathcal{C}_p,
	\end{equation}
	where
	\begin{subequations} \label{E4} 
		\begin{align}
			\mathcal{O}_{f} &:= \begin{bmatrix}
				{{C^{\top}}}&{{{\left({CA} \right)}^{\top}}}& \cdots &{{{\left({C{A^{{f}}}}\right)}^{\top}}}
			\end{bmatrix}^{\top}, \\
		    \mathcal{C}_p &:= \begin{bmatrix}B&{AB}& \cdots &{{A^{p}}B}\end{bmatrix},
		\end{align}
	\end{subequations}
    are the extended observability matrix and controllability matrix, respectively. Under Assumption \ref{Asp1}, we have that $\text{rank}\left(\mathcal{O}_{f}\right) = \text{rank}\left(\mathcal{C}_{p}\right) = n_x$. Therefore, the Hankel matrix $\mathcal{H}_{fp}$ has rank equal to $n_x$ \cite{Kailath1980linear}. Let us now assume that the order $n_x$ has been correctly selected, and our main focus is to obtain a realization of system matrices. It turns out that there are still some facets not previously recognized in the literature. 
	
	Most realization algorithms are based on subspaces of $\mathcal{H}_{fp}$, as these subspaces encode key information regarding the underlying state-space model. The range space of $\mathcal{H}_{fp}$, denoted by $\mathcal{R}(\mathcal{H}_{fp})$, is the set of all linear combinations of the columns of $\mathcal{H}_{fp}$, and the null (kernel) space of $\mathcal{H}_{fp}$, denoted by $\mathcal{K}(\mathcal{H}_{fp})$, is the set of all solutions of the homogeneous equation $\mathcal{H}_{fp}\alpha=0$. Similar definitions apply to the row space $\mathcal{R}(\mathcal{H}_{fp}^\top)$ and the left null space $\mathcal{K}(\mathcal{H}_{fp}^\top)$. In terms of dimensionality, the range space $\mathcal{R}(\mathcal{H}_{fp})$ and the row space $\mathcal{R}(\mathcal{H}_{fp}^\top)$ both have dimension $n_x$, equal to the rank of $\mathcal{H}_{fp}$; the null space $\mathcal{K}(\mathcal{H}_{fp})$ has dimension $p+1-n_x$, and the left null space $\mathcal{K}(\mathcal{H}_{fp}^\top)$ has dimension $f+1-n_x$. Moreover, concerning orthogonality, $\mathcal{R}(\mathcal{H}_{fp})$ and $\mathcal{K}(\mathcal{H}_{fp}^\top)$ are orthogonal complements, as are $\mathcal{R}(\mathcal{H}_{fp}^\top)$ and $\mathcal{K}(\mathcal{H}_{fp})$.
	According to \eqref{E3} we further have
	\begin{subequations} \label{E6} 
		\begin{align}
			\mathcal{R}(\mathcal{H}_{fp}) & = \mathcal{R}(\mathcal{O}_{f}), \ \mathcal{K}(\mathcal{H}_{fp}^\top) = \mathcal{K}(\mathcal{O}_{f}^\top), \\
			\mathcal{R}(\mathcal{H}_{fp}^\top) & = \mathcal{R}(\mathcal{C}_{p}^\top), \ \mathcal{K}(\mathcal{H}_{fp}) = \mathcal{K}(\mathcal{C}_{p}).		
		\end{align}
	\end{subequations}
	As shown in \eqref{E6}, there is a symmetric structure in the four subspaces of $\mathcal{H}_{fp}$. For the purpose of finding a realization, it is sufficient to study one pair of the range and null subspaces of $\mathcal{H}_{fp}$. As with in most realization algorithms, we focus on the range space and left null space of the extended observability matrix $\mathcal{O}_{f}$, or equivalently, $\mathcal{R}(\mathcal{H}_{fp})$ and $\mathcal{K}(\mathcal{H}_{fp}^\top)$.
	
	The range space was studied first, as in the celebrated Ho-Kalman algorithm \cite{Ho1966effective}, Kung's algorithm \cite{Kung1978new}, the eigensystem realization algorithm (ERA) \cite{Juang1985eigensystem} and many variants of subspace identification methods (SIMs) \cite{Larimore1990canonical,Van1994n4sid,Verahegen1992subspace,Qin2005novel,Jansson2003subspace,Chiuso2007role}. For convenience, we refer to this type of realization methods as the range-space-based realization (RASBR). Our presentation of RASBR is primarily based on the prototype algorithm introduced by Kung \cite{Kung1978new}. Although this method is outdated, as more powerful SIMs have later been proposed in the literature, it captures some important ideas that characterize RASBR algorithms, and is convenient for illustration purposes. Moreover, the future horizon is fixed at $f=n_x$. The impact of $f$ is discussed in Section \ref{Sct3}. 
	
	\textbf{Range-Space-Based Realization (RASBR)}: The starting point of RASBR is the SVD of $\mathcal{H}_{{n_x}p}$:
	\begin{equation} \label{E7} 
		\mathcal{H}_{{n_x}p} = USV^{\top} = U_1S_1V_1^{\top}\in \mathbb{R}^{(n_x+1)\times(p+1)},
	\end{equation}
	where 
	\begin{equation*}
		\begin{split}
			U & = \begin{bmatrix}u_1&u_2&\cdots&u_{n_x+1}\end{bmatrix}\in \mathbb{R}^{(n_x+1)\times(n_x+1)}, \\
			V & = \begin{bmatrix}v_1&v_2&\cdots&v_{p}\end{bmatrix}\in \mathbb{R}^{(p+1)\times(p+1)}, \\
			S & = \begin{bmatrix}
				\text{diag}\left(\sigma_1,\sigma_2,\cdots,\sigma_{n_x},0\right)&0
			\end{bmatrix}\in \mathbb{R}^{(n_x+1)\times(p+1)},
		\end{split}
	\end{equation*}
	and $u_i \in \mathbb{R}^{n_x+1}$ with $v_i \in \mathbb{R}^{p+1}$ are left and right singular vectors, respectively. Moreover, $S_1 = \text{diag}\left(\sigma_1,\sigma_2,\dots,\sigma_{n_x}\right)$ is a diagonal matrix, having the first $n_x$ non-zero singular values of $S$ on its diagonal and satisfying $\sigma_1 \geq \sigma_2 \geq \cdots \geq \sigma_{n_x}$, and $U_1\in \mathbb{R}^{(n_x+1)\times n_x}$ with $V_1\in \mathbb{R}^{(p+1)\times n_x}$ contain the left and right singular vectors corresponding to $S_1$, respectively. From \eqref{E7}, balanced realizations of $\mathcal{O}_{n_x}$ and $\mathcal{C}_p $ are given by
	\begin{subequations} \label{E8} 
		\begin{align}
		    \bar{\mathcal{O}}_{n_x} &= U_1S_1^{1/2}, \\
		    \bar{\mathcal{C}}_p &= S_1^{1/2} V_1^{\top}.		
	    \end{align}
	\end{subequations}
	Moreover, based on the shift-invariant property of $\bar{\mathcal{O}}_{n_x}$, the system matrices are given by
	\begin{subequations} \label{E9} 
		\begin{align}
			{\bar C}_{R} &= \bar{\mathcal{O}}_{n_x}(1,:), \\
			{\bar A}_{R} &=(\bar{\mathcal{O}}_{n_x}^{+})^{\dagger}\bar{\mathcal{O}}_{n_x}^{-},\\
			{\bar B}_{R} &= \bar{\mathcal{C}}_p (:,1),
		\end{align}
	\end{subequations}
	where $\bar{\mathcal{O}}_{n_x}^{-}$ and $\bar{\mathcal{O}}_{n_x}^{+}$ are the last and first $n_x$ rows of $\bar{\mathcal{O}}_{n_x}$, respectively. The notation $(\cdot)^{\dagger}$ is the Moore-Penrose pseudo-
	inverse of a matrix. Moreover, the indexing of matrices, such as $\bar{\mathcal{O}}_{n_x}(1,:)$ and $\bar{\mathcal{C}}_p (:,1)$, follow MATLAB syntax. 
	
	\textbf{Null-Space-Based Realization (NUSBR)}: Unlike the considerable attention given to RASBR, NUSBR methods have received comparatively less focus and been somewhat overlooked. Although earlier work \cite{Ottersten1994subspace, Viberg1997analysis,Swindlehust1995subspace,Jansson1996linear} recognized that the null space of the extended observability matrix $\mathcal{O}_{n_x}$ could be obtained via two step weighted least-squares, their methods require an explicit estimate of $\mathcal{O}_{n_x}$ to construct the optimal weighting, necessitating the use of SVD to obtain such an estimate. Moreover, given a set of so-called Kronecker indices (observability indices), an echelon state-space realization was given in \cite{Hannan2012statistical}. This type of realization establishes a relation between the coefficients of the system's characteristic polynomial and the left null space of the Hankel matrix. It was further argued in \cite[Chapter 3]{Regalia2018adaptive} that virtually all identification methods reduce to finding a vector in the null space of a Hankel matrix built from data from the system to identify. Building on this body of work, we recently proposed a novel NUSBR method \cite{He2024weighted} which bypasses the estimate of $\mathcal{O}_{n_x}$ and directly estimates its left null space from the Hankel matrix $\mathcal{H}_{n_xp}$ using least-squares. To illustrate it, according to the Cayley-Hamilton theorem, we have that
	\begin{equation} \label{E10} 
		A^{n_x} + a_1 A^{{n_x}-1} + \cdots + a_{{n_x}-1} A + a_{{n_x}}I = 0,
	\end{equation}
	where $\left\{a_i\right\}_{i=1}^{n_x}$ are coefficients of the characteristic polynomial of matrix $A$. Since $\text{rank}\left({\mathcal{O}}_{n_x}\right) = n_x$ and ${\mathcal{O}}_{n_x}\in \mathbb{R}^{(n_x+1)\times n_x}$, we have that the dimension of the left null space of ${\mathcal{O}}_{n_x}$ is
	\begin{equation} \label{E10a} 
		\text{dim}\left(\mathcal{K}({\mathcal{O}}_{n_x}^\top)\right) = 1.
	\end{equation}
	Using equation \eqref{E10}, we have that
	\begin{equation}  \label{E11} 
		\begin{bmatrix}
			a_{n_x}&a_{{n_x}-1}& \cdots &a_1&1\end{bmatrix} {\mathcal{O}}_{n_x} = 0.
	\end{equation}
	From \eqref{E10a} and \eqref{E11} it follows that the left null space of ${\mathcal{O}}_{n_x}$ is completely parameterized by the coefficients $\left\{a_i\right\}_{i=1}^{n_x}$. According to \eqref{E6}, $\mathcal{K}(\mathcal{H}_{n_xp}^\top) = \mathcal{K}(\mathcal{O}_{n_x}^\top)$, i.e., the left null space of $\mathcal{O}_{n_x}$ is also the left null space of $H_{{n_x}p}$. Define
	\begin{equation} \label{E12}
		\bm{a} := \begin{bmatrix}
			a_{n_x}&a_{{n_x}-1}& \cdots &a_1\end{bmatrix}.
	\end{equation}
	We then have $\begin{bmatrix}\bm{a}&1\end{bmatrix}\mathcal{H}_{{n_x}p} = 0$. Correspondingly, we partition $\mathcal{H}_{{n_x}p}$ into two parts:
	\begin{equation} \label{E13}
		\mathcal{H}_{{n_x}p} =  \begin{bmatrix}
			{{g_0}}&{{g_1}}& \cdots &{{g_p}}\\
			{{g_1}}&{{g_2}}& \cdots &{{g_{p + 1}}}\\
			\vdots & \vdots & \ddots & \vdots \\
			\hline
			{g_{n_x}}&{g_{n_x + 1}}& \cdots &{{g_{n-1}}}
		\end{bmatrix} =: \begin{bmatrix}{\mathcal{H}}_{n_xp}^{+}\\ \hline
			{\mathcal{H}}_{n_xp}^{-}\end{bmatrix},
	\end{equation}
	which further gives
	\begin{equation} \label{E14}
		\bm{a} \mathcal{H}_{n_xp}^{+} + \mathcal{H}_{n_xp}^{-} = 0.
	\end{equation}
	In this way, $\bm{a}$ is given by
	\begin{equation} \label{E15}
		{\bm{a}} = -{\mathcal{H}}_{n_xp}^{-}({ \mathcal{H}}_{n_xp}^{+})^{\dagger}.
	\end{equation}
	Then, system matrices $A$, $B$ and $C$ in their observer canonical form \cite{Kailath1980linear} are given as follows: 
	\begin{subequations} \label{E16}
		\begin{align}
			{\bar C}_{N} &= \begin{bmatrix}1&0&0&\cdots&0
			\end{bmatrix}, \\
			{\bar A}_{N} &= \begin{bmatrix}
				-{a}_{1}&1&0&\cdots&0\\
				-{a}_{2}&0&1&\cdots&0\\
				\vdots&\vdots&\vdots&\ddots&\vdots\\
				-{a}_{n_x}&0&0&\cdots&0
			\end{bmatrix}, \\
			{\bar B}_{N} &= \bar{\mathcal{O}}_{n-1}^{\dagger}\begin{bmatrix}
				g_{0}&g_{1}&\cdots&g_{n-1}
			\end{bmatrix}^\top,
		\end{align}
	\end{subequations}
	where the extended observability matrix $\bar{\mathcal{O}}_{n-1}$ is computed from matrices ${\bar C}_{N}$ and ${\bar A}_{N}$.
	
	Clearly, if the first $n$ Markov parameters $\left\{g_i = CA^{i}B\right\}_{i=0}^{n-1}$ are the true values, then the above RASBR and NUSBR algorithms return the true system matrices, up to a similarity transformation. Thus, under these ideal conditions, the choice of method does not matter. In practice, however, we are typically given estimates of these Markov parameters, usually obtained via least-squares methods from a finite impulse response (FIR) model or an auto-regressive exogenous input (ARX) model, in which case the estimates of these two algorithms may be very different. We illustrate this on a simple example.
	
	\textbf{Experiment 1}: Consider system matrices
	\begin{equation} \label{E17}
		A = \begin{bmatrix}\lambda&\delta\\0&\lambda\end{bmatrix},
		B = \begin{bmatrix}0\\1\end{bmatrix}, 
		C = \begin{bmatrix}1&0\end{bmatrix}.
	\end{equation}
	System 1 is given by taking $\lambda = 0.1, \delta=2$, and System 2 is given by taking $\lambda = 0.9, \delta=10$. It can be verified that the above two systems satisfy Assumption \ref{Asp1}. We add independent Gaussian noise with unit variance to the first 20 Markov parameters of these two systems, and use the RASBR and NUSBR algorithms to obtain realizations of system matrices. We measure the estimation quality using the so-called FIT, defined as
	\begin{equation} \label{E17a}
		{\rm{FIT}} = 100\times\left(1-\frac{\norm{g_{\circ} - \hat g}}{\norm{g_{\circ} - {\bar g}_{\circ}}}\right),
	\end{equation}
	where $g_{\circ}$ includes the first 100 Markov parameters of the true system, ${\bar g}_{\circ}$ is the average of $g_{\circ}$, $\hat g$ includes the Markov parameters of the estimated systems, and $\norm{\cdot}$ is the Euclidean norm. We run 200 Monto Carlo trails, and the results are shown in Figure \ref{F2}. As observed from the figure, for System 1, NUSBR performs better than RASBR, whereas for System 2, RASBR is better. This indicates that the accuracy of the two realization algorithms is both case-specific and suboptimal-- a phenomenon that has not been previously recognized in the literature. This discrepancy raises at least the following questions:
	
	(1) Why do the two realization methods perform differently, and when does one outperform the other?
	
	(2) Is there an optimal realization algorithm, and, if so, what is it?
	
	(3) What are the statistical properties of these algorithms?
	
	The main motivation of this work is to address the above questions. 
	\begin{figure}
		\centering
		\includegraphics[scale=0.5]{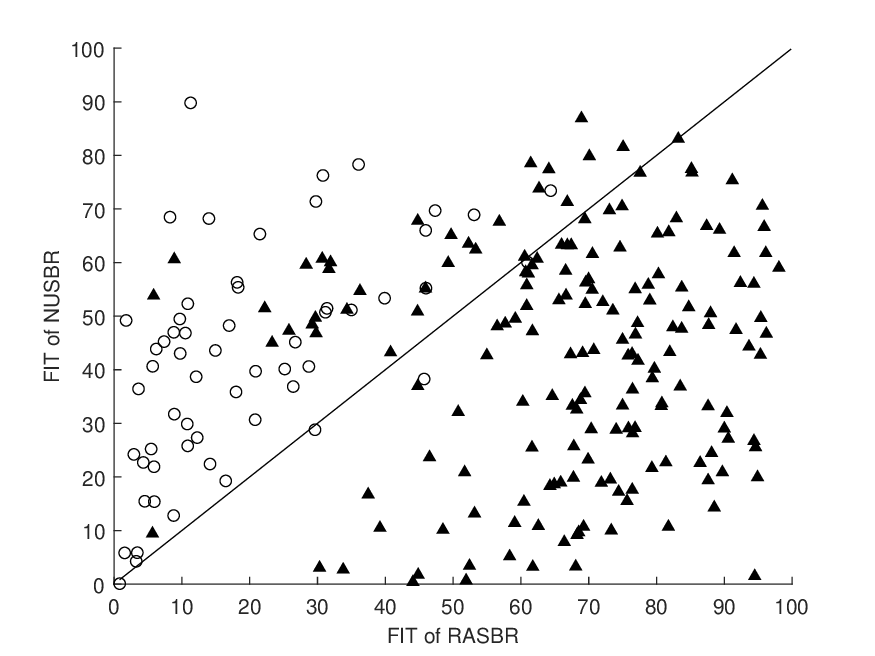}
		\caption{FITs of impulse responses from 200 Monte Carlo trials: System 1 ($\circ$) and System 2 ($\blacktriangle$), and the solid line is a bisector line.}
		\label{F2}
	\end{figure}

	\subsection{Contribution} \label{Sct1.2}
	
	The contribution of this paper are three-fold:
	
	(1) We compare the above two realization methods within a least-squares framework, where it is revealed that RASBR corresponds to a total least-squares (TLS) solution, and NUSBR corresponds to an ordinary least-squares (OLS) solution. Moreover, by analyzing the differences in sensitivity of the two algorithms, we determine the conditions when one or the other realization algorithm is to be preferred, and identify the factors that contribute to an ill-conditioned problem.
	
	(2) We argue that the optimal realization is a weighted null space fitting problem, and provide a weighted least-squares (WLS) solution for solving it.
	
	(3) We provide a statistical analysis for these methods, including consistency and asymptotic normality. Our results suggest that under mild assumptions, the OLS and TLS solutions are asymptotically equivalent, while the WLS solution gives the smallest asymptotic variance. 
	
	Although the realization algorithms considered in this paper are simplified, they serve as prototype algorithms for state-of-the-art SIMs. Therefore, these findings provide a perspective to explain why the performance of many SIMs tends to be suboptimal and case-dependent. Moreover, they shed new light on pursuing an asymptotically efficient SIM, a long-standing open problem in system identification.
	
	\subsection{Related Work} \label{Sct1.3}
	
	The foundational work on the approximate realization problem is the celebrated Ho-Kalman algorithm \cite{Ho1966effective}, which is computationally efficient and straightforward to implement. However, this algorithm operates under the assumption of noise-free data. When applied to an Hankel matrix using corrupted Markov parameters, the resulting matrix often has an incorrect rank and can lead to poor performance \cite{Van1983approximate}. Subsequent algorithms, such as Kung's method \cite{Kung1978new} and the ERA algorithm \cite{Juang1985eigensystem}, were developed as modifications of the original Ho-Kalman algorithm, where SVD remains as a central step. These advancements paved the way for modern SIMs \cite{Larimore1990canonical,Van1994n4sid,Verahegen1992subspace,Qin2005novel,Jansson2003subspace,Chiuso2007role}. A key innovation in SIMs is the direct estimation of the Hankel (Hankel-like) matrix $\mathcal{H}_{fp}$ in a unstructured manner, rather than deriving it from a series of Markov parameters. Additionally, various data-dependent weighting matrices are applied to the estimated Hankel matrix prior to the SVD step to further enhance performance.
	
	Although most SVD-based realization algorithms focus on the range space of the observability matrix, earlier work \cite{Ottersten1994subspace, Viberg1997analysis,Swindlehust1995subspace,Jansson1996linear} have recognized the potential of utilizing the null space for obtaining an optimal estimate of the system's poles. Moreover, it was highlighted in \cite{De2019least, De2020least} that the least-squares optimal realization of autonomous LTI systems can be reformulated as a multi-parameter eigenvalue problem. This problem can be solved by applying forward shift recursions to a given set of multivariate polynomial equations, generating so-called block Macaulay matrices. A key concept therein is the elimination of the state vector by leveraging the Cayley-Hamilton theorem, with similar ideas also discussed in \cite{Nicolai2023realizing,Bakshi2023new}. Furthermore, a weighted null-space fitting (WNSF) method was proposed in \cite{Galrinho2018parametric}, which gives an asymptotically efficient estimate for many common parametric model structures, such as output-error, ARMAX and Box-Jenkins. Very recently, we extended this approach to state-space models \cite{He2024weighted}.
	
	Given the close relationship between SVD and the TLS problem, the approximate realization problem was treated as a special global TLS problem in \cite{Markovsky2005application}, where a kernel representation of the system is used. The problem is then related to a structured TLS problem, and subsequently solved using a solution developed for the structured TLS problem. Related studies can be found in \cite{De1993singular,De1993structured,Markovsky2007overview,De2002total}.
	
	In summary, although many solutions have been proposed for the classical realization problem, it is widely acknowledged that many of these methods are often suboptimal and highly case-dependent, with the underlying reasons remaining unclear. Additionally, there has been no systematic effort to compare these methods within a unified framework. This work aims to address this gap using a least-squares framework.
	
	\subsection{Structure} \label{Sct1.4}
	
	The disposition of the paper is as follows: We briefly present some preliminaries of TLS and OLS problems in Section~\ref{Sct2}. In Section~\ref{Sct3}, we first cast the RASBR and NUSBR algorithms introduced in this section into a TLS and OLS problem, respectively. Under the least-squares framework, we then analyze the sensitivity of the solutions for TLS and OLS, which determines the conditions when one or the other method is to be preferred. In Section~\ref{Sct4}, we argue that the optimal realization is a WLS solution. In Section~\ref{Sct5}, statistical properties of the three least-squares methods, including consistency and asymptotic normality, are presented. Finally, the paper is concluded in Section~\ref{Sct6}. All proofs and technical lemmas are provided in the Appendix.
	
	\subsection{Notations}  \label{Sct1.5}
	(1) For a matrix $X$ with appropriate dimensions, $X^\top$, $X^{-1}$, $X^{1/2}$, $X^\dagger$, $\norm{X}$, $\norm{X}_F$, ${\rm{rank}}(X)$ and $\rho(X)$ denote its transpose, inverse, square root, Moore$\mbox{-}$Penrose pseudo-inverse, spectral norm, Frobenius norm, rank, and spectral radius, respectively. $X \succ(\succcurlyeq)$ $0$ means that $X$ is positive (semi) definite. ${\rm{diag}}(X_1,X_2)$ is a block matrix having $X_1$ and $X_2$ on its diagonal. The notation $X_1\otimes X_2$ denotes the Kronecker product of $X_1$ and $X_2$. $\mathcal{R}(X)$ and $\mathcal{K}(X)$ denote the range and null (kernel) subspaces of the matrix $X$, respectively. The matrices $I$ and $0$ are the identity and zero matrices with compatible dimensions.
	 
	(2) The multivariate normal distribution with mean $\mu$ and covariance $\Sigma$ is denoted as $\mathcal{N}(\mu,\Sigma)$. The notation $\mathbb{E}\left\{x\right\}$ is the expectation of a random vector $x$. 
		
	(3) The notation $x_N = \mathcal{O}(f_N)$ means that there is an $M>0$ such that $\limsup\limits_{N \to \infty }\frac{x_N}{f_N} \leq M$.

	\section{Preliminaries on Least-Squares} \label{Sct2}
	
	In this section, we give a brief introduction to the OLS and TLS methods.
	
	\subsection{Ordinary Least-Squares} \label{Sct2.1}
	
	Given a data matrix $\Phi \in \mathbb{R}^{m\times l}$, where $m > l$, and a vector $z\in \mathbb{R}^{m}$ of observations, we seek a vector $\theta\in \mathbb{R}^{l}$ to minimize the following cost function:
	\begin{equation} \label{E18}
		\norm{z - \Phi\theta}.
	\end{equation}
	Following \cite{Golub1980analysis}, the OLS problem can be reformulated as follows:
	\begin{subequations} \label{E19}
		\begin{align}
			\min_{\tilde z} \ &\norm{{\tilde z}}, \label{E19a} \\ 
			\text{s.t.} \ &z+{\tilde z} \in \mathcal{R}(\Phi). \label{E19b}		
		\end{align}
	\end{subequations}
    Once a minimizing ${\tilde z}$ is found, then any $\theta$ satisfying $\Phi{\theta}=z+{\tilde z}$ is called a solution to the OLS problem \eqref{E18}. It is well known that if matrix $\Phi$ has full column rank, the OLS problem then has a unique solution, given by
	\begin{equation} \label{E20}
		{\hat\theta}_\text{ols} = \Phi^\dagger z  = \left(\Phi^\top\Phi\right)^{-1}\Phi^\top z.
	\end{equation}
	There is an underlying assumption for the above problem, namely, all the errors are confined to the observation vector $z$, which is often unrealistic in many scenarios. In practice, the data matrix $\Phi$ is frequently affected by noise as well. The consideration of perturbations of both $z$ and $\Phi$ gives birth to the TLS problem. 
	
	\subsection{Total Least-Squares} \label{Sct2.2}
	
	The TLS problem is given by
	\begin{subequations} \label{E21}
		\begin{align}
			\min_{\tilde z, \tilde \Phi} \ &\norm{\begin{bmatrix}\tilde \Phi&\tilde z\end{bmatrix}}_F, \\
			\text{s.t.} \ &z+\tilde z \in \mathcal{R}(\Phi+\tilde \Phi).
		\end{align}
	\end{subequations}
    Once a minimizing $\begin{bmatrix}\tilde \Phi&\tilde z\end{bmatrix}$ is found, then any ${\theta}$ satisfying $(\Phi+\tilde \Phi){\theta}=z+{\tilde z}$ is called a TLS solution \cite{Van1991total}. The TLS solution is closely related to the SVD. To proceed, let
	\begin{equation} \label{E22}
		\begin{bmatrix}\Phi&z\end{bmatrix} = P\Lambda Q^\top \in \mathbb{R}^{m\times (l+1)},
	\end{equation}
	where
	\begin{equation*}
		\begin{split}
			P & = \begin{bmatrix}p_1&p_2&\cdots&p_m\end{bmatrix}\in \mathbb{R}^{m\times m}, \\
			Q & = \begin{bmatrix}q_1&q_2&\cdots&q_{l+1}\end{bmatrix}\in \mathbb{R}^{(l+1)\times (l+1)}, \\
			\Lambda	& = \begin{bmatrix}
				\text{diag}\left(\sigma_1,\sigma_2,\dots,\sigma_{l+1}\right)\\0
			\end{bmatrix} \in \mathbb{R}^{m\times (l+1)},
		\end{split}
	\end{equation*}
	and $p_i \in \mathbb{R}^{m}$ with $q_i \in \mathbb{R}^{l+1}$ are left and right singular vectors, respectively. Moreover, the singular values satisfy $\sigma_1 \geq \sigma_2 \geq \cdots \geq \sigma_{l+1}$. If $\sigma_{l} > \sigma_{l+1}$, then the unique solution to the above TLS problem is given by
	\begin{equation} \label{E23}
		{\hat\theta}_\text{tls} = -q_{l+1,1:l}q_{l+1,l+1}^{-1},
	\end{equation}
	where $q_{l+1,1:l}$ is a vector containing the first $l$ entries of the vector $q_{l+1}$, and $q_{l+1,l+1}$ is the last entry of $q_{l+1}$. Moreover, the solution \eqref{E23} can be equivalently written as \cite{Golub1980analysis}
	\begin{equation} \label{E24}
		{\hat\theta}_\text{tls} = \left(\Phi^\top\Phi-\sigma_{l+1}^2I\right)^{-1}\Phi^\top z.
	\end{equation}
	From \eqref{E24}, we recognize that the TLS method corrects errors effectively by eliminating the effect of the noises by means of the term $\sigma_{l+1}^2I$. To illustrate this, we define $\bar \Phi = \Phi + \tilde \Phi$, where $\Phi$ is regarded as a deterministic matrix. If $\mathbb{E}\left\{\tilde \Phi\right\} = 0$, we then have $\mathbb{E}\left\{{\bar \Phi}^\top{\bar \Phi}\right\} = \Phi^\top\Phi + \mathbb{E}\left\{{\tilde \Phi}^\top{\tilde \Phi}\right\}$. The underlying assumption herein is that each column of the error matrix $\tilde \Phi$ is independent, and has the same variance $\sigma_{l+1}^2I$, corresponding to the minimal eigenvalue of $\Phi^\top\Phi$. Then, the effect of the noise is effectively corrected by eliminating the term $\sigma_{l+1}^2I$. Further comparisons between OLS and TLS can be found in \cite{Golub1980analysis,Van1991total,Markovsky2007overview}.

	\section{Least-Squares Analysis of Realization Algorithms} \label{Sct3}
	
	In this section, we first cast RASBR and NUSBR algorithms introduced in Section~\ref{Sct1} into two least-squares problems. Under the least-squares framework, we then compare the sensitivity of their solutions and determine  conditions when one or the other realization algorithm is to be preferred.
	
	Compared to matrix $A$ in \eqref{E1}, the estimation of matrices $B$ and $C$ is fairly trivial. Thus, the focus of our work is on the estimation of matrix $A$, which is also at the center of the realization problem. From now on, we use the hat notation $\hat {\mathcal{H}}_{n_xp}$ to represent the Hankel matrix, obtained by replacing $\{g_{i}\}$ by estimates $\left\{\hat g_i\right\}$ in ${\mathcal{H}}_{n_xp}$. Similar conventions apply to other notations.
	
	\subsection{Null-Space-Based Realization is an Ordinary Least-Squares Solution} \label{Sct3.1}
	The reformulation of NUSBR as an OLS problem is quite straightforward. Following \eqref{E18}, after taking $\theta = {\bm{a}}^\top$, $z = -\left({\hat {\mathcal{H}}}_{n_xp}^{-}\right)^\top$, $\Phi = \left(\hat {\mathcal{H}}_{n_xp}^{+}\right)^\top$, and introducing $\tilde z = \left({\tilde {\mathcal{H}}}_{n_xp}^{-}\right)^\top$, the OLS problem can be formulated as in \eqref{E19}:
	\begin{subequations} \label{E25}
		\begin{align}
			\min_{\tilde{\mathcal{H}}_{n_xp}^{-}} \ &\norm{\tilde{\mathcal{H}}_{n_xp}^{-}}, \label{E25a} \\
			\text{s.t.} \ &\left(\hat {\mathcal{H}}_{n_xp}^{-} + \tilde  {\mathcal{H}}_{n_xp}^{-}\right)^\top \in \mathcal{R}\left((\hat {\mathcal{H}}_{n_xp}^{+})^\top\right), \label{E25b}		
		\end{align}
	\end{subequations}
	for which the solution is given by
	\begin{equation} \label{E26}
		\hat{\bm{a}}_\text{ols}^\top = -\left(\hat {\mathcal{H}}_{n_xp}^{+}(\hat {\mathcal{H}}_{n_xp}^{+})^{\top}\right)^{-1}\hat {\mathcal{H}}_{n_xp}^{+}\left({\hat {\mathcal{H}}}_{n_xp}^{-}\right)^\top.
	\end{equation}
	The above solution is identical to the NUSBR method \eqref{E15} with ${\mathcal{H}}_{n_xp}$ being replaced by its approximation $\hat {\mathcal{H}}_{n_xp}$. We therefore conclude that NUSBR is an OLS solution.
	
	\begin{remark} \label{Rmk1}
		As is well known, we can also solve the OLS problem \eqref{E25} using the SVD. Let the SVD of $\hat {\mathcal{H}}_{n_xp}^{+}$ be
		\begin{equation} \label{E26a}
			\hat{\mathcal{H}}_{n_xp}^{+} = \hat U^{+}\hat S^{+} (\hat V^{+})^\top \in \mathbb{R}^{n_x\times(p+1)},
		\end{equation}
		where
		\begin{equation*}
			\begin{split}
				{\hat U}^{+} & = \begin{bmatrix}{\hat u}_1^{+}&{\hat u}_2^{+}&\cdots&{\hat u}_{n_x}^{+}\end{bmatrix}\in \mathbb{R}^{n_x\times n_x}, \\
				{\hat V}^{+} & = \begin{bmatrix}{\hat v}_1^{+}&{\hat v}_2^{+}&\cdots&{\hat v}_{p}^{+}\end{bmatrix}\in \mathbb{R}^{(p+1)\times(p+1)}, \\
				{\hat S}^{+} & = \begin{bmatrix}
					\text{diag}\left({\hat\sigma}_1^{+},{\hat\sigma}_2^{+},\cdots,{\hat\sigma}_{n_x}^{+}\right) & 0
				\end{bmatrix}\in \mathbb{R}^{n_x\times(p+1)}.
			\end{split}
		\end{equation*}
		According to \cite{Golub2013matrix}, the OLS solution $\hat{\bm{a}}_\text{ols}$ is the minimum norm solution of
		\begin{equation} \label{E30a}
			\bm{a}{\hat {\mathcal{H}}}_{n_xp}^{+} = \hat{\bar{\mathcal{H}}}_{n_xp}^{-},
		\end{equation}
		where $\hat{\bar{\mathcal{H}}}_{n_xp}^{-} = \sum_{i=i}^{n_x}\hat v_i^{+}\left({\hat v}_i^{+}\right)^\top \hat {\mathcal{H}}_{n_xp}^{-}$.  
		
	\end{remark}
	
	\subsection{Range-Space-Based Realization is a Total Least-Squares Solution} \label{Sct3.2}
	Using the same notations as in the OLS problem \eqref{E25}, and taking $\tilde  \Phi = \left(\tilde {\mathcal{H}}_{n_xp}^{+}\right)^\top$, we state the following TLS problem:
	\begin{subequations} \label{E27}
		\begin{align}
			\min_{\tilde{\mathcal{H}}_{n_xp}} \ &\norm{\tilde{\mathcal{H}}_{n_xp}^\top}_F, \label{E27a}\\
			\text{s.t.} \ &\left(\hat {\mathcal{H}}_{n_xp}^{-} + \tilde  {\mathcal{H}}_{n_xp}^{-}\right)^\top \in \mathcal{R}\left((\hat {\mathcal{H}}_{n_xp}^{+} + \tilde  {\mathcal{H}}_{n_xp}^{+})^\top\right), \label{E27b}
		\end{align}
	\end{subequations}
	where $\tilde {\mathcal{H}}_{n_xp} = \begin{bmatrix}\tilde {\mathcal{H}}_{n_xp}^{+}\\ \tilde  {\mathcal{H}}_{n_xp}^{-}\end{bmatrix}$.
	To solve this problem, similar to the SVD \eqref{E7} of ${\mathcal{H}}_{n_xp}$, let
	\begin{equation} \label{E28}
		\hat {\mathcal{H}}_{n_xp} = \begin{bmatrix}\hat {\mathcal{H}}_{n_xp}^{+}\\\hat {\mathcal{H}}_{n_xp}^{-}\end{bmatrix} = \hat U\hat S {\hat V}^\top,
	\end{equation}
	where
	\begin{equation*}
		\begin{split}
			\hat U & = \begin{bmatrix}\hat u_1&\hat u_2&\cdots&\hat u_{n_x+1}\end{bmatrix}, \\
			\hat V & = \begin{bmatrix}\hat v_1&\hat v_2&\cdots&\hat v_{p}\end{bmatrix}, \\
			\hat S & = \begin{bmatrix}\text{diag}\left(\hat \sigma_1,\hat \sigma_2,\dots,\hat \sigma_{n_x+1}\right)&0\end{bmatrix}.
		\end{split}
	\end{equation*}
	Assuming that $\hat \sigma_{n_x} > \hat \sigma_{n_x+1}$, the unique solution to the above TLS problem is
	\begin{equation} \label{E29}
		\hat {\bm{a}}_\text{tls}^\top = \hat u_{n_x+1,1:n_x}\hat u_{n_x+1,n_x+1}^{-1},
	\end{equation}
	or equivalently,
	\begin{equation} \label{E30}
		\hat{\bm{a}}_\text{tls}^\top = -\left(\hat {\mathcal{H}}_{n_xp}^{+}(\hat {\mathcal{H}}_{n_xp}^{+})^{\top}-\hat \sigma_{n_x+1}^2I\right)^{-1}\hat {\mathcal{H}}_{n_xp}^{+}\left(\hat {\mathcal{H}}_{n_xp}^{-}\right)^\top.
	\end{equation}
	
	\begin{remark} \label{Rmk2}
		Corresponding to the form \eqref{E30a} for OLS, an equivalent formulation for the TLS problem \eqref{E27} is given in \cite{Golub2013matrix}, which states that the TLS solution $\hat{\bm{a}}_\text{tls}$ is the minimum norm solution of
		\begin{equation} \label{E30b}
			\bm{a}{\doublehat {\mathcal{H}}}_{n_xp}^{+} = \doublehat {\mathcal{H}}_{n_xp}^{-},
		\end{equation}
		where $\begin{bmatrix}\doublehat {\mathcal{H}}_{n_xp}^{+}\\\doublehat {\mathcal{H}}_{n_xp}^{-}\end{bmatrix} = \sum_{i=i}^{n_x}\hat \sigma_{i}\hat u_i{\hat v}_i^\top$ is the TLS approximation of $\hat {\mathcal{H}}_{n_xp}$. As shown in the next subsection, the forms \eqref{E30a} and \eqref{E30b} are useful for the sensitivity analysis of least-squares.
	\end{remark}

	After replacing the coefficients $\left\{a_i\right\}_{i=1}^{n_x}$ in $\bm{a}$ with estimates in $\hat{\bm{a}}_\text{tls}$, we obtain an estimate of the $A$-matrix on the observer canonical form \eqref{E16}, denoted by ${\hat A}_{\text{tls}}$. At this point, we have formulated a TLS problem, with $\bm{a}$ as decision variable, and solved it using SVD, resulting in an $A$-matrix on observer canonical form. However, this approach has not yet been linked to the RASBR algorithm \eqref{E9} which estimates the $A$-matrix via the SVD \eqref{E28} as follows:
	\begin{equation}\label{E31}
		{\hat A}_{R} = \left((\hat {\mathcal{O}}_{n_x}^{+})^{\top}\hat {\mathcal{O}}_{n_x}^{+}\right)^{-1}(\hat {\mathcal{O}}_{n_x}^{+})^{\top}\hat {\mathcal{O}}_{n_x}^{-},
	\end{equation}
	where $\hat {\mathcal{O}}_{n_x}^{-}$ and $\hat {\mathcal{O}}_{n_x}^{+}$ are the last and first $n_x$ rows of $\hat{\mathcal{O}}_{n_x}$, respectively, and $\hat {\mathcal{O}}_{n_x} = \hat U_1\hat S_1^{1/2}$ with $\hat S_1 = \text{diag}\left(\hat \sigma_1,\hat \sigma_2,\dots,\hat \sigma_{n_x}\right)$, and $\hat U_1$ containing left singular vectors corresponding to $\hat S_1$.
	
	The following theorem shows that the TLS solution \eqref{E30} is equivalent to the RASBR's solution \eqref{E31}, in the sense that they return the same matrix $A$, up to a similarity transformation.
	\begin{theorem} \label{Thm0}
		Matrices ${\hat A}_{\text{tls}}$ and ${\hat A}_{R}$, defined by \eqref{E30} and \eqref{E31}, respectively, are similar, i.e., 
		there exists an invertible matrix $T$, such that ${\hat A}_{\text{tls}} = T^{-1}{\hat A}_{R}T$.
	\end{theorem}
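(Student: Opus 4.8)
The plan is to show that both $\hat A_{\text{tls}}$ and $\hat A_R$ are realizations of the same transfer function, or more directly, that they share the same characteristic polynomial and are each nonderogatory (cyclic), which for companion-type and shift-realized matrices guarantees similarity. The cleanest route, however, exploits the fact that both matrices are built from the \emph{same} SVD \eqref{E28} of $\hat{\mathcal{H}}_{n_xp}$. The key observation to establish is that $\hat A_{\text{tls}}$, being the companion (observer canonical) matrix generated by the TLS coefficient vector $\hat{\bm a}_{\text{tls}}$, and $\hat A_R$, being the shift-realized matrix $\big((\hat{\mathcal O}_{n_x}^{+})^\top \hat{\mathcal O}_{n_x}^{+}\big)^{-1}(\hat{\mathcal O}_{n_x}^{+})^\top \hat{\mathcal O}_{n_x}^{-}$ from the balanced observability factor, both encode the same left-null-space direction of the rank-$n_x$ TLS approximant $\doublehat{\mathcal H}_{n_xp}$.

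\textbf{Step 1: Identify the common object.} First I would recall from Remark~\ref{Rmk2} that the TLS solution corresponds to the rank-$n_x$ truncation $\doublehat{\mathcal H}_{n_xp} = \sum_{i=1}^{n_x}\hat\sigma_i \hat u_i \hat v_i^\top = \hat U_1 \hat S_1 \hat V_1^\top$, whose left null space is spanned precisely by $\hat u_{n_x+1}$, and that $\hat{\bm a}_{\text{tls}}$ is determined by \eqref{E29} from this vector. Simultaneously, the RASBR matrix $\hat A_R$ in \eqref{E31} is the least-squares solution to the shift equation $\hat{\mathcal O}_{n_x}^{+}\hat A_R = \hat{\mathcal O}_{n_x}^{-}$, where $\hat{\mathcal O}_{n_x} = \hat U_1 \hat S_1^{1/2}$ is the balanced observability factor of the \emph{same} truncant $\doublehat{\mathcal H}_{n_xp}$.

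\textbf{Step 2: Match characteristic polynomials.} The core of the argument is to show $\det(sI - \hat A_R) = s^{n_x} + \hat a_1 s^{n_x-1} + \cdots + \hat a_{n_x}$, i.e., that the coefficients returned by RASBR's shift realization coincide (up to the canonical reordering encoded in \eqref{E12} and \eqref{E16}) with $\hat{\bm a}_{\text{tls}}$. I would argue this via the Cayley--Hamilton relation \eqref{E11}: since $\hat u_{n_x+1}$ annihilates $\hat{\mathcal O}_{n_x}$ from the left and is orthogonal to $\mathcal R(\hat{\mathcal O}_{n_x}) = \mathcal R(\hat U_1)$, the shift-invariant structure of $\hat{\mathcal O}_{n_x}$ forces $\hat A_R$ to satisfy the same polynomial whose coefficients $\hat u_{n_x+1}$ encodes. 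Concretely, writing the shift relation row by row and using that the first $n_x$ and last $n_x$ rows of $\hat{\mathcal O}_{n_x}$ overlap in $n_x-1$ rows, the characteristic polynomial of the shift operator is forced by the single null direction.

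\textbf{The main obstacle} I anticipate is the bookkeeping between the observer canonical parametrization \eqref{E16}, where coefficients appear in reversed order per \eqref{E12}, and the shift-realization coordinates of $\hat A_R$; the two live in different bases, so a direct entrywise comparison will not work and I must instead argue at the level of the characteristic polynomial (a similarity invariant) rather than the matrices themselves. Once equality of characteristic polynomials is established, I would close the proof by invoking that both $\hat A_{\text{tls}}$ (a companion matrix) and $\hat A_R$ (a minimal shift realization, since $\hat{\mathcal O}_{n_x}$ has full column rank $n_x$ and is observable by construction) are nonderogatory, so that two nonderogatory matrices with identical characteristic polynomials are similar. The existence of the transforming matrix $T$ then follows from the theory of cyclic matrices, completing the argument.
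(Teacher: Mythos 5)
Your proposal is correct and takes essentially the same route as the paper's proof: both exploit that $\hat{\mathcal{O}}_{n_x}=\hat U_1\hat S_1^{1/2}$ is built from the same SVD as the TLS solution, that the (exact) shift relation $\hat{\mathcal{O}}_{n_x}^{-}=\hat{\mathcal{O}}_{n_x}^{+}\hat A_R$ together with Cayley--Hamilton places the characteristic coefficients of $\hat A_R$ in the left null space of $\hat{\mathcal{O}}_{n_x}$, that orthogonality of the singular vectors places $\begin{bmatrix}\hat{\bm{a}}_\text{tls}&1\end{bmatrix}$ in that same null space, and that this null space is one-dimensional, forcing the two coefficient vectors to coincide. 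Your explicit appeal to nonderogatory (cyclic) matrices---the companion form $\hat A_{\text{tls}}$ and the observable shift realization $\hat A_R$---to pass from equal characteristic polynomials to similarity is in fact a more careful justification of the final step than the paper's terse closing remark.
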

	\begin{proof}
		See Appendix \ref{App0}.
	\end{proof}
	
	Since matrices ${\hat A}_{R}$ and ${\hat A}_{\text{tls}}$ are similar, representing the same linear mapping under two possibly different bases, we conclude that RASBR is a TLS solution. Furthermore, Theorem \ref{Thm0} indicates that RASBR admits a solution ${\hat A}_{\text{tls}}$ in the null space of the extended observability matrix $\hat {\mathcal{O}}_{n_x}$. In other words, the RASBR algorithm simultaneously seeks to recover the null space of $\hat {\mathcal{O}}_{n_x}$, although its solution ${\hat A}_{R}$ is explicitly based on the range space of $\hat {\mathcal{O}}_{n_x}$. This provides a unified framework for comparing RASBR and NUSBR, enabling a performance evaluation simply by comparing the solutions $\hat{\bm{a}}_\text{ols}$ \eqref{E26} and $\hat{\bm{a}}_\text{tls}$ \eqref{E30}.
	
	\begin{remark} \label{Rmk3}
		In our presentation, the future horizon is fixed at $f = n_x$ for analysis and comparison. Although it is not necessary to make this choice, for instance, a typical choice for many SIMs is to take $f = p$, our perspective that NUSBR is an OLS solution and RASBR is a TLS solution remains for other choices of $f$. To explain this, for the case $f > n_x$, we have $\text{dim}\left(\mathcal{K}({\mathcal{O}}_{f}^\top)\right) = f+1-n_x$. Based on the shift-invariant structure of $\mathcal{O}_{f}$ and the Cayley-Hamilton theorem, the null-space of $\mathcal{O}_{f}$ can be fully parameterized in terms of the rows of the matrix
		\begin{equation} \label{E33a}
			N_f(\bm{a}) = \begin{bmatrix}
				a_{n_x}& \cdots &a_1&1&0&\cdots&0 \\
				0&a_{n_x}& \cdots &a_1&1&\cdots&0 \\
				\vdots& \ddots& \ddots&  \ddots& \ddots& \ddots& \vdots\\
				0&\cdots&0&a_{n_x}&a_{n_x-1}& \cdots &1 \\
			\end{bmatrix},
		\end{equation}
		where $N_f(\bm{a}) \in \mathbb{R}^{(f+1-n_x)\times (f+1)}$ is a banded Toeplitz matrix having $\begin{bmatrix}\bm{a}&1&0&\cdots&0\end{bmatrix}$ as its first row and $\begin{bmatrix}{a}_{n_x}&0&\cdots&0\end{bmatrix}^\top$ as its first column  \cite{De2019least}. Leveraging the shift-invariant structure of  $N_f(\bm{a})$ and ${\mathcal{H}}_{fp}$, it can be shown that the condition $N_f(\bm{a}){\mathcal{O}}_{f} = N_f(\bm{a}){\mathcal{H}}_{fp} = 0$ is equivalent to imposing $\begin{bmatrix}\bm{a}&1\end{bmatrix}{\mathcal{H}}_{n_xp} = 0$, which corresponds to the case $f = n_x$. Therefore, there is no compelling reason to select $f > n_x$ for the NUSBR method, as doing so results in the same OLS solution.
		
		For the RASBR, matrix $A$ is similarly given by \eqref{E31}, 
		\begin{equation}\label{E33b}
			{\hat A}_{R} = \left((\hat {\mathcal{O}}_{f}^{+})^{\top}\hat {\mathcal{O}}_{f}^{+}\right)^{-1}(\hat {\mathcal{O}}_{f}^{+})^{\top}\hat {\mathcal{O}}_{f}^{-}.
		\end{equation}
		Meanwhile, after partitioning matrix $\hat U$ from the SVD \eqref{E28} as
		\begin{equation}\label{E33c}
			\hat U = \begin{bmatrix}\hat u_1&\hat u_2&\cdots&\hat u_{f+1}\end{bmatrix}=\begin{bmatrix}\hat U_{11}&\hat U_{12}\\\hat U_{21}&\hat U_{22}\end{bmatrix},
		\end{equation}
		where $\hat U_{11} \in \mathbb{R}^{n_x\times n_x}$ and $\hat U_{22} \in \mathbb{R}^{({f+1-n_x})\times ({f+1-n_x})}$, we have a unique solution of the TLS problem \eqref{E27}, which is
		\begin{equation} \label{E33d}
			\hat {\bm{a}}_\text{tls}^\top = \hat U_{12}{\hat U}_{22}^{-1} \in \mathbb{R}^{n_x\times (f+1-n_x)}.
		\end{equation}
		Similar to Theorem \ref{Thm0}, the rows of the matrix $\begin{bmatrix}\hat {\bm{a}}_\text{tls}&I\end{bmatrix}$ span the left null space of $\hat {\mathcal{O}}_{f}$. After normalizing $\begin{bmatrix}\hat {\bm{a}}_\text{tls}&I\end{bmatrix}$ to have the same banded Toeplitz structure as matrix $N_f(\bm{a})$ in \eqref{E33a}, we can bridge RASBR and TLS. Moreover, in our simulations, we found that for finite sample settings, taking $f = n_x$ gives similar performance to $f > n_x$ for the RASBR method. For further discussions about the impact of $f$, see \cite{Qin2006role,Chiuso2007some}.
	\end{remark}
	
	\subsection{Sensitivity Analysis} \label{Sct3.3}
	
	In this subsection we establish some inequalities that shed light on the sensitivity of least-squares problems as well as on the relationship between $\hat{\bm{a}}_\text{tls}$ and $\hat{\bm{a}}_\text{ols}$.
	First, from \eqref{E26} and \eqref{E30} we have
	\begin{equation}  \label{E34}
		\hat{\bm{a}}_\text{tls} -  \hat{\bm{a}}_\text{ols}=-\hat\sigma_{n_x+1}^2\hat{\bm{a}}_\text{ols}\left(\hat {\mathcal{H}}_{n_xp}^{+}(\hat {\mathcal{H}}_{n_xp}^{+})^{\top}-\hat \sigma_{n_x+1}^2I\right)^{-1},	
	\end{equation}
	which suggests that when $\hat\sigma_{n_x+1} = 0$, $\hat{\bm{a}}_\text{tls} = \hat{\bm{a}}_\text{ols}$. This corresponds to the case when the Markov parameters are exact and the Hankel matrix is low rank. Besides this case, the solutions of $\hat{\bm{a}}_\text{tls}$ and $\hat{\bm{a}}_\text{ols}$ are generally different.
	
	To effectively compare the performance of these two solutions, a suitable metric is needed. We first introduce the following definition which quantifies the degree of closeness between a vector and a subspace:
	
	\begin{definition}[$\sin\theta$ theory \cite{Wedin1972perturbation}] \label{Def1}
		The angle $\theta\left(\alpha,S\right)$ between a vector $\alpha$ with $\norm{\alpha}= 1$ and a subspace $S$ is defined by
		\begin{equation} \label{E34b}
			\sin\theta\left(\alpha,S\right) := \min_{\beta\in S}\norm{\alpha-\beta} = \norm{\left(I-P_S\right)\alpha},
		\end{equation}
		where $P_S$ is the orthogonal projection onto $S$.
	\end{definition}
	
	The above definition is extended to the following definition to quantify the distance between two subspaces:
	\begin{definition}[{\cite[Section 2.5.3]{Golub2013matrix}}] \label{Def2}
		The distance between two subspaces $S_1$ and $S_2$ is defined by
		\begin{equation} \label{E34c}
			\norm{\sin\theta\left(S_1,S_2\right)} := \norm{P_1-P_2},
		\end{equation}
		where $P_i$ is the orthogonal projection onto $S_i$ for $i=1,2$.
	\end{definition}
	
	Based on the above two definitions, we have the following lemma which bounds the distances of $\hat{\bm{a}}_\text{tls}$ and $\hat{\bm{a}}_\text{ols}$ to the true value $\bm{a}$.
	\begin{lemma} \label{Lem0}
		The distances of the solutions of TLS and OLS to the true value $\bm{a}$ satisfy
		\begin{subequations} \label{E34d}
			\begin{align}
				\nonumber
				&\frac{\norm{\hat{\bm{a}}_\text{tls} -  \bm{a}}}{\norm{\begin{bmatrix}\bm{a}&1
				\end{bmatrix}}} = \norm{\sin\theta\left(\mathcal{K}\left(\doublehat{\mathcal{H}}_{n_xp}^\top\right),\mathcal{K}\left({\mathcal{H}}_{n_xp}^\top\right)\right)}  \\			
				&= \norm{\sin\theta\left(\mathcal{R}\left(\doublehat{\mathcal{H}}_{n_xp}\right),\mathcal{R}\left({\mathcal{H}}_{n_xp}\right)\right)} \leq \frac{\norm{\tilde{\mathcal{H}}_{n_xp}}}{\hat\sigma_{n_x}}, \label{E34d1}\\
				\nonumber
				&\frac{\norm{\hat{\bm{a}}_\text{ols} -  \bm{a}}}{\norm{\begin{bmatrix}\bm{a}&1
				\end{bmatrix}}} = \norm{\sin\theta\left(\mathcal{K}\left({\hat{\bar{\mathcal{H}}}_{n_xp}^\top}\right),\mathcal{K}\left({\mathcal{H}}_{n_xp}^\top\right)\right)} \\
				& = \norm{\sin\theta\left(\mathcal{R}\left(\hat{\bar{\mathcal{H}}}_{n_xp}\right),\mathcal{R}\left({\mathcal{H}}_{n_xp}\right)\right)} \leq \frac{\norm{\tilde{\mathcal{H}}_{n_xp}^{+}}}{{\hat\sigma}_{n_x}^{+}} \label{E34d2},
			\end{align}
		\end{subequations}
		where $\doublehat {\mathcal{H}}_{n_xp} = \begin{bmatrix}\doublehat {\mathcal{H}}_{n_xp}^{+}\\\doublehat {\mathcal{H}}_{n_xp}^{-}\end{bmatrix}$ and $\hat{\bar{\mathcal{H}}}_{n_xp} = \begin{bmatrix}
			\hat{{\mathcal{H}}}_{n_xp}^{+}\\\hat{\bar{\mathcal{H}}}_{n_xp}^{-}
		\end{bmatrix}$ are given in \eqref{E30b} and \eqref{E30a}, and $\hat\sigma_{n_x}$ with ${\hat\sigma}_{n_x}^{+}$ are the $n_x$-th largest singular values of $\hat{{\mathcal{H}}}_{n_xp}$ and  $\hat{{\mathcal{H}}}_{n_xp}^{+}$, respectively.
	\end{lemma}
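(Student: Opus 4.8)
The plan is to work throughout with the one-dimensional left null spaces at play and their spanning vectors. First I would record that, because $\mathcal{H}_{n_xp}$, $\doublehat{\mathcal{H}}_{n_xp}$ and $\hat{\bar{\mathcal{H}}}_{n_xp}$ are all matrices in $\mathbb{R}^{(n_x+1)\times(p+1)}$ of rank $n_x$, each of the left null spaces $\mathcal{K}(\mathcal{H}_{n_xp}^\top)$, $\mathcal{K}(\doublehat{\mathcal{H}}_{n_xp}^\top)$ and $\mathcal{K}(\hat{\bar{\mathcal{H}}}_{n_xp}^\top)$ is one-dimensional. From \eqref{E11} the true space is spanned by $w:=\begin{bmatrix}\bm{a}&1\end{bmatrix}^\top$; from the minimum-norm characterizations \eqref{E30b} and \eqref{E30a} in Remarks~\ref{Rmk2} and~\ref{Rmk1}, the TLS and OLS spaces are spanned by $\hat{w}_\text{tls}:=\begin{bmatrix}\hat{\bm{a}}_\text{tls}&1\end{bmatrix}^\top$ and $\hat{w}_\text{ols}:=\begin{bmatrix}\hat{\bm{a}}_\text{ols}&1\end{bmatrix}^\top$, respectively (up to the sign of the last entry, which is immaterial for the norms below). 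The structural fact that all three representatives share the same last coordinate is what drives the first equality.

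I would prove the middle equality first, as it is exact and purely geometric. Since the introduction records that $\mathcal{R}(\mathcal{H}_{n_xp})$ and $\mathcal{K}(\mathcal{H}_{n_xp}^\top)$ are orthogonal complements in $\mathbb{R}^{n_x+1}$, the corresponding orthogonal projections satisfy $P_{\mathcal{R}(\mathcal{H}_{n_xp})}=I-P_{\mathcal{K}(\mathcal{H}_{n_xp}^\top)}$, and likewise for $\doublehat{\mathcal{H}}_{n_xp}$ (resp.\ $\hat{\bar{\mathcal{H}}}_{n_xp}$). Substituting into \defref{Def2} gives $\norm{P_{\mathcal{K}(\doublehat{\mathcal{H}}_{n_xp}^\top)}-P_{\mathcal{K}(\mathcal{H}_{n_xp}^\top)}}=\norm{P_{\mathcal{R}(\mathcal{H}_{n_xp})}-P_{\mathcal{R}(\doublehat{\mathcal{H}}_{n_xp})}}$, so the null-space and range-space angles coincide. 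This disposes of the second equality in both \eqref{E34d1} and \eqref{E34d2} at once.

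For the first equality I would relate the normalized coefficient error to the subspace angle. Writing $P_w$ for the projection onto the true line $\mathcal{R}(w)$ and using that $w$ and $\hat{w}_\text{tls}$ have equal last coordinates, the difference $\hat{w}_\text{tls}-w=\begin{bmatrix}\hat{\bm{a}}_\text{tls}-\bm{a}&0\end{bmatrix}^\top$ lies in the coordinate hyperplane, so $\norm{\hat{\bm{a}}_\text{tls}-\bm{a}}/\norm{\begin{bmatrix}\bm{a}&1\end{bmatrix}}$ is the length of a specific element of $\mathcal{R}(w)^\perp$ measured against $w$, which I would identify with the $\sin\theta$ quantity of \defref{Def1} for the unit representatives of the two lines. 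I expect this identification to be the first delicate point: the projection residual and the normalized coefficient error agree only up to the factor $\norm{\hat{w}_\text{tls}}$ (they coincide to first order in the perturbation, and exactly when $n_x=1$), so the equality should be read in the linearized sensitivity sense appropriate to this subsection, with the rigorous content carried by the ensuing inequality.

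Finally, for the bounds I would invoke Wedin's $\sin\theta$ theorem \cite{Wedin1972perturbation} on the pertinent singular subspaces. For TLS, $\doublehat{\mathcal{H}}_{n_xp}$ is the rank-$n_x$ truncation of $\hat{\mathcal{H}}_{n_xp}$, so its left null line is $\mathcal{R}(\hat{u}_{n_x+1})$, while the true left null line corresponds to the exact zero singular value of $\mathcal{H}_{n_xp}$; applying the theorem with perturbation $\tilde{\mathcal{H}}_{n_xp}=\hat{\mathcal{H}}_{n_xp}-\mathcal{H}_{n_xp}$ and gap $\hat{\sigma}_{n_x}$ yields $\leq\norm{\tilde{\mathcal{H}}_{n_xp}}/\hat{\sigma}_{n_x}$. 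The OLS argument must be adapted, because $\hat{\bar{\mathcal{H}}}_{n_xp}$ is not a truncated SVD of $\hat{\mathcal{H}}_{n_xp}$: the construction leaves the top block $\hat{\mathcal{H}}_{n_xp}^{+}$ untouched and only replaces the bottom row by its projection onto $\mathcal{R}((\hat{\mathcal{H}}_{n_xp}^{+})^\top)$. Hence the estimated subspace is governed by $\hat{\mathcal{H}}_{n_xp}^{+}$ alone, and the classical least-squares sensitivity bound (equivalently, the $\sin\theta$ theorem applied to $\hat{\mathcal{H}}_{n_xp}^{+}$) produces the effective perturbation $\tilde{\mathcal{H}}_{n_xp}^{+}$ and gap $\hat{\sigma}_{n_x}^{+}$, giving $\leq\norm{\tilde{\mathcal{H}}_{n_xp}^{+}}/\hat{\sigma}_{n_x}^{+}$. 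The main obstacle I anticipate is precisely this OLS step: isolating that only the top block perturbs the estimated subspace, and verifying the gap conditions ($\hat{\sigma}_{n_x}>\hat{\sigma}_{n_x+1}$ for TLS, full column rank of $\hat{\mathcal{H}}_{n_xp}^{+}$ for OLS) so that the perturbation theorem applies with the stated denominators.
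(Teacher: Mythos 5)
Your proposal is correct and follows essentially the same route as the paper's own (very terse) proof: the first equality from Definition~\ref{Def1} via the unique spanning vectors $\begin{bmatrix}\bm{a}&1\end{bmatrix}$ and $\begin{bmatrix}\hat{\bm{a}}_\text{tls}&1\end{bmatrix}$, the second from invariance of the Definition~\ref{Def2} distance under orthogonal complementation, and the bounds from Wedin's generalized $\sin\theta$ theory, which the paper invokes through inequality (12) of \cite{Van1989accuracy} applied to $\hat{\mathcal{H}}_{n_xp}$ versus ${\mathcal{H}}_{n_xp}$ (and analogously to $\hat{\mathcal{H}}_{n_xp}^{+}$ versus ${\mathcal{H}}_{n_xp}^{+}$ for OLS) rather than deriving it directly from Wedin as you do. The caveats you flag---the normalization factor $\norm{\begin{bmatrix}\hat{\bm{a}}_\text{tls}&1\end{bmatrix}}$ in the first equality and the need to isolate the top-block perturbation in the OLS bound---are genuine, but the paper's three-sentence proof glosses over them as well, so your reconstruction matches, and is if anything more careful than, the original argument.
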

	\begin{proof}
		See Appendix \ref{App1}.
	\end{proof}
	
	\textbf{When TLS is better than OLS}: The terms $\norm{\tilde{\mathcal{H}}_{n_xp}}$ and $\norm{\tilde{\mathcal{H}}_{n_xp}^{+}}$ are related to the inaccuracy of the Hankel matrix $\hat{\mathcal{H}}_{n_xp}$. Since $\tilde {\mathcal{H}}_{n_xp} = \begin{bmatrix}\tilde {\mathcal{H}}_{n_xp}^{+}\\ \tilde  {\mathcal{H}}_{n_xp}^{-}\end{bmatrix}$, where $\tilde{\mathcal{H}}_{n_xp}^{+}$ contains $n_x$ rows and $\tilde  {\mathcal{H}}_{n_xp}^{-}$ contains the last row, we can approximate the norm as $\norm{\tilde{\mathcal{H}}_{n_xp}} \approx\norm{\tilde{\mathcal{H}}_{n_xp}^{+}}$. Then, Lemma~\ref{Lem0} suggests that the difference in sensitivity of TLS and OLS depends on the ratio
	\begin{equation}
		\frac{\norm{\hat{\bm{a}}_\text{ols} -  \bm{a}}}{\norm{\hat{\bm{a}}_\text{tls} -  \bm{a}}} \approx \frac{\hat\sigma_{n_x}}{\hat\sigma_{n_x}^{+}} =: \hat\kappa.
	\end{equation}
	Based on interlacing inequalities for singular values \cite{Golub2013matrix}, we have that $\hat\sigma_{n_x} \geq \hat\sigma_{n_x}^{+}$, which implies that $\hat\kappa \geq 1$. Therefore, the above ratio $\hat\kappa$ is not informative about when OLS is better than TLS. However, it suggests that the better accuracy of TLS as opposed to OLS will be more pronounced when the ratio $\hat\kappa$ is larger. This statement is supported by the following simulation result. 
	
	\textbf{Experiment 2}: We use the following MATLAB code to generate random systems:
	\begin{equation*}
		\begin{split}
			&{\rm{m = idss(drss(n_x,1,1));}}\\
			&{\rm{m.d = zeros(1,1);}}\\
			&{\rm{m.b = 5*randn(n_x,1);}}
		\end{split}
	\end{equation*}
	Following the suggestion in \cite{Rojas2015critical}, the magnitude of the sampled dominant pole $p_{\rm{max}}$ is restricted to satisfy $0.78<p_{\rm{max}}<0.9$, and the system dimension $n_x=2$ and number of Markov parameters $n=20$. Next, we evaluate the performance of TLS and OLS using the corrupted Markov parameters of these random systems, where independent Gaussian noise with the variance of 0.5 is added to the true Markov parameters. Based on the ratio $\hat\kappa$ of these random systems, we divide these random systems into three groups, where each group consists of 200 Monte Carlo trials (corresponding to 200 random systems). The parameter settings for these groups are summarized in Table~\ref{Table1a}. The performance is evaluated by the $\text{FIT}$ defined in \eqref{E17a}, and the results are shown in Figure~\ref{F3a}. As shown from the figure, from Groups A to C, the better accuracy of TLS compared to OLS becomes increasingly evident as the ratio $\hat\kappa$ increases. 
	
	\begin{table}[H]
		\caption {Parameter Settings of Experiment 2} \label{Table1a}
		\begin{center}
			\begin{tabular}{cccccc}
				\toprule
				{Group} & spectral radius of $A$ & $n_x$ & $n$ & $\hat\kappa = {\hat\sigma}_{n_x}/{\hat\sigma}_{n_x}^{+}$ \\
				\midrule
				A & $0.78<\rho(A)<0.9$ &  2  &  20 & $1.0<\hat\kappa<1.1$   \\
				B & $0.78<\rho(A)<0.9$ &  2  &  20 & $1.3<\hat\kappa<1.4$  \\
				C & $0.78<\rho(A)<0.9$ &  2  &  20 & $1.6<\hat\kappa<1.7$  \\
				\bottomrule
			\end{tabular}
		\end{center}
	\end{table}

	\begin{figure}
		\centering
		\includegraphics[scale=0.6]{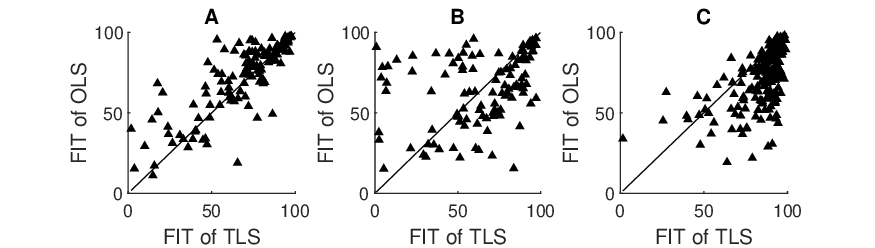}
		\caption{FITs of impulse responses for three experiment groups in Table~\ref{Table1a}: A random system ($\blacktriangle$), and the solid line is a bisector line.}
		\label{F3a}
	\end{figure}
	
	\begin{remark}
		The factors which influence the ratio $\hat\kappa$, or equivalently $\kappa := \sigma_i / \sigma_i^{+}$: It was suggested in \cite{Van1989accuracy} that the ratio $\hat\kappa$ is mainly influenced by the length and the orientation of $\mathcal{H}_{n_xp}^{-}$ w.r.t. the singular vectors of $\mathcal{H}_{n_xp}^{+}$. Here we briefly summarize their impact. Before proceeding, we mention that, similar to the SVD of $\hat{\mathcal{H}}_{n_xp}^{+}$ and $\hat{\mathcal{H}}_{n_xp}$ in \eqref{E26a} and \eqref{E28}, the corresponding SVD of the true matrices ${\mathcal{H}}_{n_xp}^{+}$ and ${\mathcal{H}}_{n_xp}$ is obtained by omitting the hat notation in \eqref{E26a} and \eqref{E28}.	
		
		(1) The orientation of the observation matrix $\mathcal{H}_{n_xp}^{-}$ w.r.t. the lowest singular vectors $v_i^{+}$ of $\mathcal{H}_{n_xp}^{+}$ plays an important role. The extent to which $\sigma_i$ is larger depends on the orientation of $\mathcal{H}_{n_xp}^{-}$ w.r.t. $v_i^{+}$ of $\mathcal{H}_{n_xp}^{+}$. It is not difficult to see that those directions $v_i^{+}$ of $\mathcal{H}_{n_xp}^{+}$ onto which $\mathcal{H}_{n_xp}^{-}$ has a larger projection will be favored and have the largest ratio $\kappa$. Hence, the ratio $\kappa$ increases when $\mathcal{H}_{n_xp}^{-}$ becomes closer and closer to the direction of $v_{n_x}^{+}$. This happens not only if $\mathcal{H}_{n_xp}^{-}$ is close to the lowest singular vector $u_{n_x}^{+}$, but also if the condition number $\sigma_1^{+} / \sigma_{n_x}^{+}$ of $\mathcal{H}_{n_xp}^{+}$ is large. Indeed, a large condition number implies a small $\sigma_{n_x}^{+}$. Since the matrix $\mathcal{H}_{n_xp}$ is low rank, we have $\left(\mathcal{H}_{n_xp}^{-}\right)^\top = \sum_{i=1}^{n_x} \beta_i v_i^{+}$, thus the exact solution is $\bm{a}^\top = \sum_{i=1}^{n_x} \frac{\beta_i}{\sigma_i^{+}} v_i^{+}$. Hence, the component $\beta_{n_x} / \sigma_{n_x}^{+}$ of $\bm{a}$ in the direction of $v_{n_x}^{+}$ is large whenever $\sigma_{n_x}^{+}$ is small. It is important to note that $\sigma_{n_x}^{+}$ should not be too small, as this may result in an ill-conditioned problem, leading to different conclusions. This issue is discussed later. The better accuracy of TLS in comparison to OLS is most pronounced when each column of $\mathcal{H}_{n_xp}^{-}$ is parallel to $u_{n_x}^{+}$. In this case, the ratio $\kappa$ reaches its maximal value for a given $\norm{\mathcal{H}_{n_xp}^{-}}$ and $\sigma_n^{+}$ is given by 
		\begin{equation}  \label{EA36}
			\sigma_n = \sqrt{(\sigma_{n_x}^{+})^2 + \norm{\mathcal{H}_{n_xp}^{-}}^2}.
		\end{equation}
		
		(2) The length of each observation vector in $\mathcal{H}_{n_xp}^{-}$ influences the ratio $\kappa$. This is clear from \eqref{EA36}. Making $\norm{\mathcal{H}_{n_xp}^{-}}$ larger increases the ratio $\kappa$ if $\mathcal{H}_{n_xp}^{-}$ is parallel to $u_{n_x}^{+}$ and hence, the better accuracy of the TLS solution w.r.t. the OLS solution will be more pronounced.
	\end{remark}
	
	\textbf{Experiment 1} (Revisited): Returning to Systems 1 and 2 in Section \ref{Sct1}, we compute the ratio $\kappa = {\sigma}_{n_x}/{\sigma}_{n_x}^{+}$ for the true systems and average $\hat\kappa = \hat{\sigma}_{n_x}/\hat{\sigma}_{n_x}^{+}$ for the 200 random noisy cases. We find that for System 1, $\kappa_1 = 1.0104$ and $\hat\kappa_1 = 1.0320$, and for System 2, $\kappa_2 = 1.7890$ and $\hat\kappa_2 = 1.7286$. Based on our argument that the better accuracy of TLS w.r.t. OLS will be more pronounced when the ratio $\hat\kappa$ is larger, this result explains why TLS outperforms OLS for System~2. However, this is not sufficient to explain why OLS performs better than TLS for System~1. Since the ratio $\hat\kappa$ becomes larger when $\hat\sigma_{n_x}^{+}$ becomes smaller, it might be expected that TLS would consistently outperform OLS when $\hat\sigma_{n_x}^{+}$ is very small. However, this is not the case due to the ill-conditioning of the least-squares problem. In such situations, as we will see later, the conditioning of TLS is always worse than that of OLS, leading OLS to provide a more reliable solution. 
	
	\textbf{When OLS is better than TLS}: We now discuss the ill-conditioning of least-squares, which explains why OLS performs better than TLS for System 1 in Experiment 1. It was shown in \cite{Golub1980analysis} that the difference between OLS and TLS grows with
	\begin{equation} \label{E35A}
		\hat\delta:={\hat\sigma}_{n_x}^{+} - \hat\sigma_{n_x+1}.
	\end{equation}
	This is summarized in the following lemma:
	
	\begin{lemma} [{\cite[Corollary 4.2]{Golub1980analysis}}] \label{Lem1}
		If ${\hat\sigma}_{n_x}^{+} > \hat\sigma_{n_x+1}$, then
		\begin{subequations}  \label{E35}
			\begin{align}
				&\norm{\hat{\bm{a}}_\text{tls} - \hat{\bm{a}}_\text{ols}} \leq \frac{\norm{{\hat {\mathcal{H}}}_{n_xp}^{-}}\rho_{\text{ols}}}{\left({\hat\sigma}_{n_x}^{+}\right)^2 - \hat\sigma_{n_x+1}^2}, \label{E35a}\\
				&\rho_{\text{tls}} \leq \rho_{\text{ols}}\left(1+\frac{\norm{{\hat {\mathcal{H}}}_{n_xp}^{-}}}{\hat\delta}\right), \label{E35b}
			\end{align}
		\end{subequations}
		where $\rho_{\text{ols}} =\norm{{\hat {\mathcal{H}}}_{n_xp}^{-} - \hat{\bm{a}}_\text{ols}\hat {\mathcal{H}}_{n_xp}^{+}}$, $\rho_{\text{tls}} =\norm{{\hat {\mathcal{H}}}_{n_xp}^{-} - \hat{\bm{a}}_\text{tls}\hat {\mathcal{H}}_{n_xp}^{+}}$.
	\end{lemma}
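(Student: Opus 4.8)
Since the bound restates \cite[Corollary 4.2]{Golub1980analysis}, the plan is to give a self-contained derivation in the present notation. Abbreviate $\Phi := (\hat{\mathcal{H}}_{n_xp}^{+})^\top$, $z := -(\hat{\mathcal{H}}_{n_xp}^{-})^\top$ and $\theta := \hat{\bm{a}}_\text{ols}^\top$, exactly as in the OLS reformulation \eqref{E25}, so that $\rho_\text{ols}=\norm{z-\Phi\theta}$ is the OLS residual and $\rho_\text{tls}=\norm{z-\Phi\hat{\bm{a}}_\text{tls}^\top}$ the TLS residual on the unperturbed data. The hypothesis $\hat\sigma_{n_x}^{+}>\hat\sigma_{n_x+1}$ forces $\Phi^\top\Phi-\hat\sigma_{n_x+1}^2 I\succ 0$, so every inverse below exists and $\hat\delta>0$. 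Three book-keeping identities are used throughout: the columns of $\begin{bmatrix}\Phi & z\end{bmatrix}$ are, up to a sign, the rows of $\hat{\mathcal{H}}_{n_xp}$, hence $\sigma_{\min}\!\left(\begin{bmatrix}\Phi & z\end{bmatrix}\right)=\hat\sigma_{n_x+1}$; similarly $\sigma_{\min}(\Phi)=\hat\sigma_{n_x}^{+}$ and $\norm{z}=\norm{\hat{\mathcal{H}}_{n_xp}^{-}}$.

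The crux of the proof is the scalar inequality
\[
\hat\sigma_{n_x+1}^2\,\norm{\hat{\bm{a}}_\text{ols}} \le \norm{\hat{\mathcal{H}}_{n_xp}^{-}}\,\rho_\text{ols}.
\]
To obtain it I would test $\sigma_{\min}\!\left(\begin{bmatrix}\Phi & z\end{bmatrix}\right)$ against the unit vector $w:=\begin{bmatrix}-\theta^\top & 1\end{bmatrix}^\top/\sqrt{1+\norm{\theta}^2}$. Since $\begin{bmatrix}\Phi & z\end{bmatrix}w=(z-\Phi\theta)/\sqrt{1+\norm{\theta}^2}$, the variational characterization $\sigma_{\min}=\min_{\norm{u}=1}\norm{\begin{bmatrix}\Phi & z\end{bmatrix}u}$ gives $\hat\sigma_{n_x+1}\sqrt{1+\norm{\theta}^2}\le\rho_\text{ols}$, whence both $\hat\sigma_{n_x+1}\norm{\theta}\le\rho_\text{ols}$ and, testing against the last canonical basis vector (for which $\begin{bmatrix}\Phi & z\end{bmatrix}$ returns $z$), $\hat\sigma_{n_x+1}\le\norm{z}$. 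Multiplying these two inequalities and recalling $\norm{\theta}=\norm{\hat{\bm{a}}_\text{ols}}$ yields the display.

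With the crux inequality in hand, both parts follow. For \eqref{E35a} I would take norms in \eqref{E34}, use $\norm{(\Phi^\top\Phi-\hat\sigma_{n_x+1}^2 I)^{-1}}=1/((\hat\sigma_{n_x}^{+})^2-\hat\sigma_{n_x+1}^2)$ (the matrix is a function of $\Phi^\top\Phi$, whose smallest eigenvalue is $(\hat\sigma_{n_x}^{+})^2$), and substitute the crux inequality for the leading factor $\hat\sigma_{n_x+1}^2\norm{\hat{\bm{a}}_\text{ols}}$. For \eqref{E35b} I would split $z-\Phi\hat{\bm{a}}_\text{tls}^\top=(z-\Phi\theta)+\Phi(\theta-\hat{\bm{a}}_\text{tls}^\top)$, apply the triangle inequality, and express $\Phi(\hat{\bm{a}}_\text{tls}^\top-\theta)$ via \eqref{E34}. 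The one spectral estimate needed here is $\norm{\Phi(\Phi^\top\Phi-\hat\sigma_{n_x+1}^2 I)^{-1}}\le 1/\hat\delta$: through the SVD of $\Phi$ this matrix has singular values $s/(s^2-\hat\sigma_{n_x+1}^2)$, a decreasing function of $s\ge\hat\sigma_{n_x}^{+}$, so its spectral norm equals $\hat\sigma_{n_x}^{+}/((\hat\sigma_{n_x}^{+})^2-\hat\sigma_{n_x+1}^2)$, and factoring the denominator as $\hat\delta(\hat\sigma_{n_x}^{+}+\hat\sigma_{n_x+1})$ bounds it by $1/\hat\delta$. Combining with the crux inequality gives $\norm{\Phi(\hat{\bm{a}}_\text{tls}^\top-\theta)}\le\norm{\hat{\mathcal{H}}_{n_xp}^{-}}\rho_\text{ols}/\hat\delta$, hence \eqref{E35b}.

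The main obstacle is precisely the crux inequality. Taking norms directly in \eqref{E34} yields the correct but unpolished bound $\hat\sigma_{n_x+1}^2\norm{\hat{\bm{a}}_\text{ols}}/((\hat\sigma_{n_x}^{+})^2-\hat\sigma_{n_x+1}^2)$; converting the numerator into the residual form $\norm{\hat{\mathcal{H}}_{n_xp}^{-}}\rho_\text{ols}$ demanded by the statement is the only nonroutine step. The trick is to extract two separate consequences -- $\hat\sigma_{n_x+1}\norm{\theta}\le\rho_\text{ols}$ and $\hat\sigma_{n_x+1}\le\norm{z}$ -- from the \emph{same} variational characterization of $\sigma_{\min}\!\left(\begin{bmatrix}\Phi & z\end{bmatrix}\right)$, and to notice that their product is exactly the quantity required. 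Everything else (the resolvent-norm identities and the monotonicity of $s/(s^2-\hat\sigma_{n_x+1}^2)$) is routine once the eigenstructure of $\Phi^\top\Phi$ is exploited rather than crude submultiplicativity.
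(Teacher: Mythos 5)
Your proof is correct, and it is worth noting that the paper itself offers no proof of Lemma~\ref{Lem1}: the result is imported verbatim as \cite[Corollary 4.2]{Golub1980analysis}, so your self-contained derivation is a genuine addition rather than a rediscovery of an internal argument. What you have reconstructed is essentially the classical Golub--Van Loan argument, correctly transposed to the paper's row-vector/Hankel notation, and every step checks out: (i) the spectral identifications $\sigma_{\min}\left(\begin{bmatrix}\Phi & z\end{bmatrix}\right)=\hat\sigma_{n_x+1}$, $\sigma_{\min}(\Phi)=\hat\sigma_{n_x}^{+}$, $\norm{z}=\norm{\hat{\mathcal{H}}_{n_xp}^{-}}$ are legitimate because negating one column leaves singular values unchanged and $p+1\ge n_x+1$ makes the variational characterization $\sigma_{\min}=\min_{\norm{u}=1}\norm{\begin{bmatrix}\Phi & z\end{bmatrix}u}$ applicable; (ii) the crux inequality $\hat\sigma_{n_x+1}^2\norm{\hat{\bm{a}}_\text{ols}}\le\norm{\hat{\mathcal{H}}_{n_xp}^{-}}\rho_\text{ols}$ follows correctly from multiplying the two test-vector bounds (the test vector $w$ gives $\hat\sigma_{n_x+1}\norm{\theta}\le\rho_\text{ols}$, the last canonical basis vector gives $\hat\sigma_{n_x+1}\le\norm{z}$); (iii) both resolvent estimates are right, since the eigenvalues of $\Phi^\top\Phi-\hat\sigma_{n_x+1}^2I$ are $(\hat\sigma_i^{+})^2-\hat\sigma_{n_x+1}^2$ with minimum at $i=n_x$, and $s/(s^2-\hat\sigma_{n_x+1}^2)$ is indeed decreasing in $s$, so that $\norm{\Phi(\Phi^\top\Phi-\hat\sigma_{n_x+1}^2I)^{-1}}=\hat\sigma_{n_x}^{+}/((\hat\sigma_{n_x}^{+})^2-\hat\sigma_{n_x+1}^2)\le 1/\hat\delta$. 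Combined with \eqref{E34} and the triangle-inequality split, these deliver \eqref{E35a} and \eqref{E35b} exactly. One cosmetic remark: under the identifications of \eqref{E25}, your residual $\norm{z-\Phi\theta}$ equals $\norm{\hat{\mathcal{H}}_{n_xp}^{-}+\hat{\bm{a}}_\text{ols}\hat{\mathcal{H}}_{n_xp}^{+}}$, with a plus sign, which is the residual consistent with \eqref{E14} and \eqref{E26}; the minus sign in the lemma's definitions of $\rho_\text{ols}$ and $\rho_\text{tls}$ appears to be a notational slip inherited from the cited reference's $b-Ax$ convention, not a flaw in your argument.
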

	
	According to \eqref{E35a} in Lemma \ref{Lem1}, the upper bound for $\norm{\hat{\bm{a}}_\text{tls} - \hat{\bm{a}}_\text{ols}}$ becomes larger when $\hat\sigma_{n_x+1}$ is close to ${\hat\sigma}_{n_x}^{+}$. Moreover, \eqref{E35b} suggests that the gap between the costs of TLS and OLS is larger whenever $\hat\sigma_{n_x+1}$ is close to ${\hat\sigma}_{n_x}^{+}$. The following lemma indicates the extent to which the inverse of the difference $\hat\delta$ defined in \eqref{E35A} measures the sensitivity of the TLS problem:
	\begin{lemma} [{\cite[Lemma 4.3]{Golub1980analysis}}] \label{Lem2}
		If ${\hat\sigma}_{n_x}^{+} > \hat\sigma_{n_x+1}$, then
		\begin{equation}  \label{E36}
			\frac{\norm{{\hat {\mathcal{H}}}_{n_xp}^{-}{\hat v}_{n_x}}}{2\hat\delta} \leq \norm{\hat{\bm{a}}_\text{tls}} \leq  \frac{\norm{{\hat {\mathcal{H}}}_{n_xp}^{-}}}{\hat\delta}.
		\end{equation}
	\end{lemma}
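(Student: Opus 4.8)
The plan is to work directly from the closed-form TLS solution \eqref{E30}, diagonalizing the normal-equations matrix with the SVD \eqref{E26a} of $\hat{\mathcal{H}}_{n_xp}^{+}$. Writing $C := \hat{\mathcal{H}}_{n_xp}^{+}$ and $d := \hat{\mathcal{H}}_{n_xp}^{-}$, we have $CC^\top = \hat U^{+}\,\text{diag}\big((\hat\sigma_1^{+})^2,\dots,(\hat\sigma_{n_x}^{+})^2\big)(\hat U^{+})^\top$, so that $CC^\top - \hat\sigma_{n_x+1}^2 I$ is invertible precisely under the hypothesis $\hat\sigma_{n_x}^{+} > \hat\sigma_{n_x+1}$ (its smallest eigenvalue is $(\hat\sigma_{n_x}^{+})^2 - \hat\sigma_{n_x+1}^2 > 0$), which via the interlacing bound $\hat\sigma_{n_x} \ge \hat\sigma_{n_x}^{+} > \hat\sigma_{n_x+1}$ also validates the use of \eqref{E30}. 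Substituting and writing $C d^\top = \sum_{i=1}^{n_x}\hat\sigma_i^{+}\gamma_i\,\hat u_i^{+}$ with $\gamma_i := \hat{\mathcal{H}}_{n_xp}^{-}\hat v_i^{+}$, I would obtain the explicit expansion
\begin{equation*}
    \hat{\bm{a}}_\text{tls}^\top = -\sum_{i=1}^{n_x}\frac{\hat\sigma_i^{+}\gamma_i}{(\hat\sigma_i^{+})^2-\hat\sigma_{n_x+1}^2}\,\hat u_i^{+},
\end{equation*}
whence, by orthonormality of the $\hat u_i^{+}$, $\norm{\hat{\bm{a}}_\text{tls}}^2 = \sum_{i=1}^{n_x}\big(\hat\sigma_i^{+}\gamma_i/((\hat\sigma_i^{+})^2-\hat\sigma_{n_x+1}^2)\big)^2$.

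The key algebraic step, and the one that produces the constants in \eqref{E36}, is the factorization $(\hat\sigma_i^{+})^2 - \hat\sigma_{n_x+1}^2 = (\hat\sigma_i^{+} - \hat\sigma_{n_x+1})(\hat\sigma_i^{+} + \hat\sigma_{n_x+1})$. For the \emph{upper} bound, I would estimate each coefficient by $\hat\sigma_i^{+}/((\hat\sigma_i^{+})^2-\hat\sigma_{n_x+1}^2) \le 1/(\hat\sigma_i^{+}-\hat\sigma_{n_x+1}) \le 1/\hat\delta$, using $\hat\sigma_i^{+}/(\hat\sigma_i^{+}+\hat\sigma_{n_x+1})\le 1$ together with $\hat\sigma_i^{+}\ge\hat\sigma_{n_x}^{+}$; then, since the $\hat v_i^{+}$ form an orthonormal set, Bessel's inequality gives $\sum_{i=1}^{n_x}\gamma_i^2 \le \norm{\hat{\mathcal{H}}_{n_xp}^{-}}^2$, and the two estimates combine to $\norm{\hat{\bm{a}}_\text{tls}} \le \norm{\hat{\mathcal{H}}_{n_xp}^{-}}/\hat\delta$. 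For the \emph{lower} bound, I would retain only the $i=n_x$ term of the sum and use $(\hat\sigma_{n_x}^{+})^2-\hat\sigma_{n_x+1}^2 = \hat\delta(\hat\sigma_{n_x}^{+}+\hat\sigma_{n_x+1}) \le 2\hat\delta\,\hat\sigma_{n_x}^{+}$, which cancels the factor $\hat\sigma_{n_x}^{+}$ in the numerator and leaves $\norm{\hat{\bm{a}}_\text{tls}} \ge |\gamma_{n_x}|/(2\hat\delta)$; since $\gamma_{n_x} = \hat{\mathcal{H}}_{n_xp}^{-}\hat v_{n_x}^{+}$ is a scalar, $|\gamma_{n_x}| = \norm{\hat{\mathcal{H}}_{n_xp}^{-}\hat v_{n_x}^{+}}$, which is exactly the numerator in \eqref{E36}.

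I expect no genuine obstacle in the argument; the remaining point is notational rather than substantive. The vector appearing in the lower bound is the $n_x$-th right singular vector $\hat v_{n_x}^{+}$ of $\hat{\mathcal{H}}_{n_xp}^{+}$ (equivalently the $n_x$-th left singular vector of the data matrix $\Phi = (\hat{\mathcal{H}}_{n_xp}^{+})^\top$), so that $\hat{\mathcal{H}}_{n_xp}^{-}\hat v_{n_x}^{+}$ is the projection of the observation onto the least-dominant singular direction of $\Phi$, which is the standard quantity underlying \cite[Lemma 4.3]{Golub1980analysis}. The only care required is (i) confirming invertibility of $CC^\top-\hat\sigma_{n_x+1}^2 I$ from the hypothesis, as noted above, and (ii) tracking the difference-of-squares factorization so that the upper bound collects a single factor $1/\hat\delta$ while the lower bound, through the estimate $\hat\sigma_{n_x}^{+}+\hat\sigma_{n_x+1}\le 2\hat\sigma_{n_x}^{+}$, incurs the factor $1/(2\hat\delta)$.
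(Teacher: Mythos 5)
Your proof is correct, but note that the paper itself gives no proof of this lemma: it is quoted (up to notation) from \cite[Lemma 4.3]{Golub1980analysis}, so what you have done is reconstruct the original Golub--Van Loan argument rather than parallel anything in this paper's appendix. Your reconstruction is the standard one and is sound: the hypothesis makes $\hat{\mathcal{H}}_{n_xp}^{+}(\hat{\mathcal{H}}_{n_xp}^{+})^\top - \hat\sigma_{n_x+1}^2 I$ positive definite (smallest eigenvalue $(\hat\sigma_{n_x}^{+})^2-\hat\sigma_{n_x+1}^2>0$) and, via interlacing, legitimizes \eqref{E30}; the expansion $\hat{\bm{a}}_\text{tls}^\top = -\sum_{i=1}^{n_x}\hat\sigma_i^{+}\gamma_i\bigl((\hat\sigma_i^{+})^2-\hat\sigma_{n_x+1}^2\bigr)^{-1}\hat u_i^{+}$ with $\gamma_i=\hat{\mathcal{H}}_{n_xp}^{-}\hat v_i^{+}$ is exact; the upper bound follows from $\hat\sigma_i^{+}/\bigl((\hat\sigma_i^{+})^2-\hat\sigma_{n_x+1}^2\bigr)\le 1/\hat\delta$ together with Bessel's inequality, and the lower bound from retaining only the $i=n_x$ term and using $\hat\sigma_{n_x}^{+}+\hat\sigma_{n_x+1}\le 2\hat\sigma_{n_x}^{+}$. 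You also correctly identified the one imprecision in the statement as printed: per \eqref{E28}, $\hat v_{n_x}$ would denote a right singular vector of the augmented matrix $\hat{\mathcal{H}}_{n_xp}$, whereas the vector that actually belongs in the bound---and the one your proof produces---is $\hat v_{n_x}^{+}$, the $n_x$-th right singular vector of $\hat{\mathcal{H}}_{n_xp}^{+}$ from \eqref{E26a} (equivalently, the least-dominant left singular vector of the regressor matrix $(\hat{\mathcal{H}}_{n_xp}^{+})^\top$ in the Golub--Van Loan setup). So your derivation simultaneously proves the lemma and corrects a notational slip in its statement; the only thing it buys beyond the paper's citation is self-containedness, which is worthwhile here precisely because the transposed Hankel setting makes the translation of the classical result easy to get wrong.
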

	
	Lemma \ref{Lem2} implies that the TLS solution is unstable whenever $\hat\sigma_{n+1}$ is close to ${\hat\sigma}_{n_x}^{+}$. In other words, $\hat\delta$ is a measure of how close the TLS problem \eqref{E21} is to the class of insoluble TLS problems. Moreover, it was claimed in \cite{Golub1980analysis} that TLS is a deregularization of  OLS. In this sense, the conditioning of TLS problems is always worse than that of OLS. We now establish bounds for $\hat\delta$, which provide insight into when the TLS solution becomes unstable and the OLS might yield a better solution than TLS.
	\begin{lemma} \label{Lem3}
		The distance $\hat\delta$ between $ {\hat\sigma}_{n_x}^{+}$ and $\hat\sigma_{n_x+1}$ satisfies
		\begin{equation}  \label{E36a}
			\hat\delta \leq {\hat\sigma}_{n_x}^{+} \leq \Phi(A)^2\norm{C}\norm{B}\frac{1-\rho(A)^{n-n_x}}{{1-\rho(A)}} + \norm{\tilde{\mathcal{H}}_{n_xp}^{+}}, 	
		\end{equation}
		where $\Phi(A) := \sup_{\tau \geq 0}\frac{\norm{A^\tau}}{\rho(A)^{\tau/2}}$, which is finite.
	\end{lemma}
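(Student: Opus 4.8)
The plan is to establish the chain of inequalities in \eqref{E36a} from left to right, with the central task being an upper bound on the largest singular value $\hat\sigma_{n_x}^{+}$ of the perturbed submatrix $\hat{\mathcal{H}}_{n_xp}^{+}$ in terms of the system's impulse-response decay. The first inequality, $\hat\delta \leq \hat\sigma_{n_x}^{+}$, is immediate from the definition $\hat\delta := \hat\sigma_{n_x}^{+} - \hat\sigma_{n_x+1}$ in \eqref{E35A}, since singular values are nonnegative and hence $\hat\sigma_{n_x+1} \geq 0$. So the work concentrates on the second inequality.

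For the bound on $\hat\sigma_{n_x}^{+}$, I would first pass from the perturbed matrix to the true one. Writing $\hat{\mathcal{H}}_{n_xp}^{+} = {\mathcal{H}}_{n_xp}^{+} + \tilde{\mathcal{H}}_{n_xp}^{+}$ and applying Weyl's perturbation inequality for singular values (subadditivity of the spectral norm), we get $\hat\sigma_{n_x}^{+} \leq \sigma_{n_x}^{+} + \norm{\tilde{\mathcal{H}}_{n_xp}^{+}}$, which produces the additive noise term $\norm{\tilde{\mathcal{H}}_{n_xp}^{+}}$ on the right-hand side of \eqref{E36a}. It then remains to bound $\sigma_{n_x}^{+}$, the top singular value of the \emph{true} submatrix, which is at most its spectral norm, and in turn at most its Frobenius norm. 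The key step is to estimate the size of the entries of ${\mathcal{H}}_{n_xp}^{+}$, namely the Markov parameters $g_i = CA^{i}B$, and sum them up.

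The main obstacle, and the crux of the argument, is controlling $\norm{A^\tau}$ uniformly. A naive bound $\norm{A^{i}} \leq \norm{A}^{i}$ is useless because $\norm{A}$ may exceed $1$ even when $\rho(A)<1$; the matrix can exhibit transient growth. This is precisely why the quantity $\Phi(A) := \sup_{\tau \geq 0}\frac{\norm{A^\tau}}{\rho(A)^{\tau/2}}$ is introduced. I would argue that $\Phi(A)$ is finite: since $\rho(A)<1$ by Assumption~\ref{Asp1}, the ratio $\norm{A^\tau}/\rho(A)^{\tau/2}$ tends to zero as $\tau \to \infty$ (because $\norm{A^\tau}$ decays like $\rho(A)^\tau$ up to polynomial factors, which is dominated by $\rho(A)^{\tau/2}$), so the supremum over the remaining finite range of $\tau$ is attained and finite. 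Granting this, each Markov parameter obeys $|g_i| = |CA^{i}B| \leq \norm{C}\,\norm{A^{i}}\,\norm{B} \leq \Phi(A)\,\norm{C}\,\norm{B}\,\rho(A)^{i/2}$.

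Finally I would assemble the bound. Bounding $\sigma_{n_x}^{+}$ by the Frobenius norm and then by the sum of absolute values of the entries of ${\mathcal{H}}_{n_xp}^{+}$ (whose indices $i$ range up to $n-1$), each term contributes a factor $\Phi(A)\norm{C}\norm{B}\rho(A)^{i/2}$; collecting these and using the factor $\Phi(A)$ once more from the structure of the summation yields $\Phi(A)^2$. Summing the geometric-type series $\sum \rho(A)^{i/2}$ (or its appropriate index range) gives the closed form $\frac{1-\rho(A)^{n-n_x}}{1-\rho(A)}$. Combining this with the earlier Weyl step completes \eqref{E36a}. The delicate points to handle carefully are the exact bookkeeping of the index range (so that the exponent $n-n_x$ emerges correctly) and the justification that the two factors of $\Phi(A)$ arise naturally from bounding both the worst-case transient and the accumulation over the Hankel structure.
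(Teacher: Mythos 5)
Your overall skeleton matches the paper's: the first inequality follows from $\hat\sigma_{n_x+1}\geq 0$, and the main work is bounding $\hat\sigma_{n_x}^{+}$ by the norm of the noise-free matrix plus a perturbation term (your Weyl step is interchangeable with the paper's triangle inequality $\hat\sigma_{n_x}^{+}\leq\norm{{\mathcal{H}}_{n_xp}^{+}}+\norm{\tilde{\mathcal{H}}_{n_xp}^{+}}$). Where you genuinely diverge is in how the term $\Phi(A)^2\norm{C}\norm{B}\frac{1-\rho(A)^{n-n_x}}{1-\rho(A)}$ is produced. The paper does not work entrywise: it factorizes ${\mathcal{H}}_{n_xp}^{+}=\mathcal{O}_{n_x-1}\mathcal{C}_p$, bounds $\norm{\mathcal{O}_{n_x-1}}\leq\norm{C}\Phi(A)\sqrt{\tfrac{1-\rho(A)^{n_x}}{1-\rho(A)}}$ and $\norm{\mathcal{C}_p}\leq\norm{B}\Phi(A)\sqrt{\tfrac{1-\rho(A)^{p+1}}{1-\rho(A)}}$ via a block-norm inequality together with $\norm{A^k}\leq\Phi(A)\rho(A)^{k/2}$, and then multiplies; the two factors of $\Phi(A)$ each come from one factor of the Hankel matrix. (Incidentally, $\hat\sigma_{n_x}^{+}$ is the \emph{smallest} singular value of the $n_x\times(p+1)$ matrix $\hat{\mathcal{H}}_{n_xp}^{+}$, not the largest; this slip is harmless, since any singular value is at most the spectral norm.)

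Your entrywise route can be made to work, but two steps as written are wrong and need repair. First, the series bookkeeping: if you bound a sum of absolute entries you are summing terms $\rho(A)^{k/2}$, and $\sum_k\rho(A)^{k/2}$ has $1-\rho(A)^{1/2}$ in its denominator, not $1-\rho(A)$; the closed form $\tfrac{1-\rho(A)^{n-n_x}}{1-\rho(A)}$ emerges only if you sum the \emph{squares} $\rho(A)^{k}$, i.e., go through the Frobenius norm: $\norm{{\mathcal{H}}_{n_xp}^{+}}_F^2\leq\Phi(A)^2\norm{C}^2\norm{B}^2\sum_{i=0}^{n_x-1}\sum_{j=0}^{p}\rho(A)^{i+j}$, whose square root is at most $\Phi(A)\norm{C}\norm{B}\tfrac{1-\rho(A)^{n-n_x}}{1-\rho(A)}$ (using $n_x\leq p+1$ so that the factor $\sqrt{(1-\rho(A)^{n_x})(1-\rho(A)^{p+1})}\leq 1-\rho(A)^{n-n_x}$). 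Second, there is no second factor of $\Phi(A)$ ``arising from the structure of the summation'': the honest entrywise computation yields a \emph{single} $\Phi(A)$, i.e., a strictly tighter bound than the lemma, and the stated $\Phi(A)^2$ is then recovered only by the trivial observation $\Phi(A)\geq1$ (take $\tau=0$ in the supremum). As proposed, your invocation of a second $\Phi(A)$ is reverse-engineered to match the target rather than an actual step; with the Frobenius-norm fix, your argument becomes both correct and slightly stronger than the paper's factorization-based proof.
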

	\begin{proof}
		See Appendix \ref{App2}.
	\end{proof}
	
	The upper bound on the right side of \eqref{E36a} sheds light on when $\hat\sigma_{n+1}$ is close to ${\hat\sigma}_{n_x}^{+}$. To be specific, if ${\hat\sigma}_{n_x}^{+}$ is small, or equivalently, matrix ${\hat {\mathcal{H}}}_{n_xp}^{+}$ tends to be rank-deficient, then $\hat\sigma_{n_x+1}$ is likewise close to ${\hat\sigma}_{n_x}^{+}$. Moreover, the term $\frac{1-\rho(A)^{(n-n_x)}}{{1-\rho(A)}}$ in the upper bound for ${\hat\sigma}_{n_x}^{+}$ indicates that ${\hat\sigma}_{n_x}^{+}$ becomes small when either the spectral radius $\rho(A)$ is close to zero or the dimension $n_x$ is large. These statements are supported by simulation results.
	
	\textbf{Experiment 3}: Same as the Experiment 2, we use the MATLAB command $\text{drss}(\cdot)$ to generate random systems. Next, we evaluate the performance of TLS and OLS using the corrupted Markov parameters of these random systems, where independent Gaussian noise with the variance of 0.5 is added to the true Markov parameters. We conduct six groups of experiments, and run 200 Monte Carlo trials in each group. The parameter settings for these groups are summarized in Table~\ref{Table1}. These six experiments fall into two families: Family A varies the spectral radius of $A$, keeping the system dimension $n_x$ and the number of Markov parameters $n$ fixed. Family B varies the system dimension $n_x$, while the spectral radius of $A$ and the number of Markov parameters $n$ are fixed. Moreover, the average $\hat\delta$ of these randoms systems in each group is given in Table~\ref{Table1}.
	\begin{table}[H]
		\caption {Parameter Settings of Experiment 3} \label{Table1}
		\begin{center}
			\begin{tabular}{cccccc}
				\toprule
				{Groups} & spectral radius of $A$ & $n_x$ & $n$ &$\text{mean}\left(\hat\delta\right)$\\
				\midrule
				A1 & $0.85<\rho(A)<0.95$ &  2  &  20 &  2.9027    \\
				A2 & $0.55<\rho(A)<0.65$ &  2  &  20 &  0.5536   \\
				A3 & $0.05<\rho(A)<0.15$ &  2  &  20 &  0.2635   \\
				B1 & $0.78<\rho(A)<0.9$ &  2 &   50 &  1.6648   \\
				B2 & $0.78<\rho(A)<0.9$ &  6 &   50 &  0.1995   \\
				B3 & $0.78<\rho(A)<0.9$ &  10 &  50 &  0.1782   \\
				\bottomrule
			\end{tabular}
		\end{center}
	\end{table}
	
	\begin{figure}
		\centering
		\includegraphics[scale=0.6]{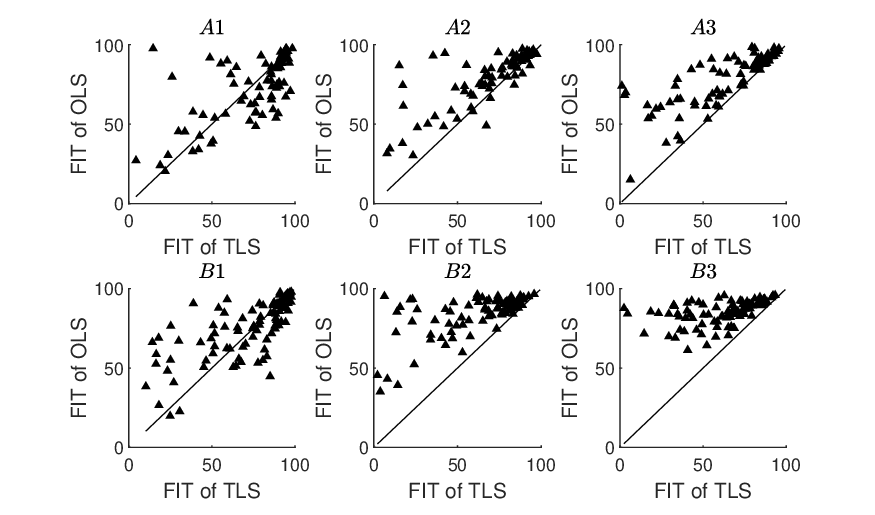}
		\caption{FITs of impulse responses for six experiment groups in Table \ref{Table1}: A random system ($\blacktriangle$), and the solid line is a bisector line.}
		\label{F3}
	\end{figure}
	
	The performance is evaluated by the $\text{FIT}$ defined in \eqref{E17a}, and the results are shown in Figure~\ref{F3}. As shown from the figure, for Experiments $A1$, $A2$ and $A3$ in family $A$, when the spectrum radius of $A$ approaches zero, OLS outperforms TLS. For Experiments $B1$, $B2$ and $B3$ in family $B$, as the system order $n_x$ increases, OLS performs better than TLS. Meanwhile, as shown in the last column of Table~\ref{Table1}, the difference $\hat\delta$ becomes smaller when the spectrum radius of $A$ approaches zero, or when the system dimension increases. These results verify that for the summarized scenarios where the TLS problem is unstable, OLS may give a better performance than TLS. However, it should be mentioned that the bound on the right side of \eqref{E36a} is just an upper bound. To reflect whether additional factors have contributed to the ill-conditioning of the realization problem, a precise quantification of the difference ${\hat\sigma}_{n_x}^{+} - \hat\sigma_{n_x+1}$ is further required.
	
	\textbf{Experiment 1} (Revisited): Let us return to the Systems 1 and 2 in the Introduction, where we notice that $\rho(A)$ of System 1 is close to zero. We now compute the difference $\delta := {\sigma}_{n_x}^{+}-{\sigma}_{n_x+1}$ for the true systems and average $\hat\delta={\hat\sigma}_{n_x}^{+}-{\hat\sigma}_{n_x+1}$ for the 200 random noisy cases. We find that for System 1, $\delta_1 = 1.8463$ and the average difference is $\hat\delta_1 = 0.6742$, and for System 2, $\delta_2 = 19.8737$ and the average difference is $\hat\delta_2 = 7.8744$. Therefore, compared with System 2, the realization problem for System 1 is likewise ill-conditioned. This is the reason why OLS is better than TLS on System 1.
	
	In summary, for a well-conditioned realization problem, the better accuracy of TLS as opposed to OLS will be more pronounced when the ratio $\hat\kappa$ is larger. Conversely, for a likewise ill-conditioned problem, since the conditioning of TLS is always worse than OLS, OLS may perform better than TLS. 
	
	\section{Optimal Realization} \label{Sct4}
	
	As shown in Sections \ref{Sct1} and \ref{Sct3}, the performance of OLS and TLS is case-dependent, which suggests that none of them is optimal. The primary reason lies in the fact that the assumptions underlying both OLS and TLS are not fully satisfied in the context of the realization problem. For instance, the upper part of the Hankel matrix, $\hat{\mathcal{H}}_{n_xp}^{+}$, is also affected by noise, rendering the OLS assumption of an accurate data matrix unrealistic. Similarly, due to the inherent Hankel structure, the rows of $\hat{\mathcal{H}}_{n_xp}^{+}$ are not independent, and they do not have the same covariance equal to $\hat{\sigma}_{n_x+1}^2I$. This means that the correction term $\hat{\sigma}_{n_x+1}^2I$ used in the TLS solution \eqref{E30} is ineffective in practice. Therefore, while the Hankel structure is convenient for the realization purpose, it also introduces challenges in managing errors due to noise. Notably, both OLS and TLS fail to utilize the Hankel matrix structure when mitigating noise effects, making them suboptimal and highly case-dependent.
	
	As well recognized in the literature of SIMs \cite{Qin2006overview}, the statistical and numerical properties of realization algorithms can be improved by pre- and post-multiplying the Hankel matrix $\hat {\mathcal{H}}_{fp}$ with weighting
	matrices before the SVD. In range-space-based methods, several candidate weighting matrices have been proposed \cite{Qin2006overview}; however, the optimal weighting that minimizes the variance remains undetermined. As highlighted in \cite{Viberg1997analysis} and recent work \cite{Galrinho2014weighted}, the null space fitting method provides a possibility to derive an optimal weighting by effectively leveraging the structure of the Hankel matrix. Building on these findings, we propose a WLS solution for the estimation of state-space models\footnote{A preliminary version of this WLS realization method appeared in \cite{He2024weighted}.}. Simulation results and statistical analysis demonstrate that the proposed approach achieves minimal variance, outperforming the OLS and TLS solutions in this respect.
		
	\subsection{Preliminaries on Weighted Least-Squares} \label{Sct4.0}	
	Before we proceed, we give a short introduction to WLS. Using the same notations as in Section~\ref{Sct2}, we consider the linear regression model
    \begin{equation}\label{eq:wls_model}
    	z = \Phi\theta +\tilde z,
    \end{equation}
    where the noise satisfies $\mathbb{E}\left\{\tilde z\right\} = 0$ and $\mathbb{E}\left\{\tilde z\,\tilde z^\top\right\} = P_{z} \succ 0$. Defining the weighting matrix
    \begin{equation}
        W = P_{z}^{-1},
    \end{equation}
    the WLS estimator is then given by
    \begin{equation} \label{eq:wls_estimator}
    	\hat\theta_{\text{wls}}
    	=\bigl(\Phi^\top W\Phi\bigr)^{-1}\Phi^\top Wz.
    \end{equation}
    Equivalently, $\hat\theta_{\text{wls}}$ can be characterized as the minimizer of the weighted residual sum of squares:
    \begin{equation} \label{eq:wls_optimization}
    	\hat\theta_{\text{wls}}=
    	\arg\min_{\theta}
    	\bigl(z - \Phi\,\theta\bigr)^\top
    	W\bigl(z - \Phi\,\theta\bigr).
    \end{equation}
    
    When the noise is homoskedastic, i.e., $P_{z} = \sigma^{2}I$, the WLS estimator~\eqref{eq:wls_estimator} reduces exactly to the OLS estimator~\eqref{E20}. In the general heteroskedastic case, the WLS estimator is the best linear unbiased estimator (BLUE), achieving the smallest variance among all linear unbiased estimators. It should be mentioned that in practice, the true covariance $P_{z}$ is often unknown and should be estimated before applying WLS.

	\subsection{Optimal Realization is a Weighted Least-Squares Solution} \label{Sct4.1}
	
	Returning to the realization problem, let $\hat {\bm{g}}_n$ be the estimate of the first $n$ Markov parameters
	\begin{equation} \label{E37}
		\bm{g}_n := \begin{bmatrix}
			g_{1}&g_2& \cdots &g_n\end{bmatrix},
	\end{equation}
	and its asymptotic distribution is assumed to be
	\begin{equation} \label{E38}
		\sqrt{N}\left(\hat {\bm{g}}_n - \bm{g}_n\right)\sim \text{As}\mathcal{N}\left(0,P_g\right),
	\end{equation}
    where $N$ denotes the number of samples used to identify the Markov parameters, and  $\text{As}\mathcal{N}\left(0,P_g\right)$ refers to an asymptotic normal distribution with mean zero and covariance $P_g$. The residual of $\bm{a}{\hat {\mathcal{H}}}_{n_xp}^{+} + \hat {\mathcal{H}}_{n_xp}^{-}$ is
	\begin{equation} \label{E39}
		\begin{split}
			&\bm{a}{\hat {\mathcal{H}}}_{n_xp}^{+} + \hat {\mathcal{H}}_{n_xp}^{-} - \left(\bm{a}{{\mathcal{H}}}_{n_xp}^{+} + {\mathcal{H}}_{n_xp}^{-}\right) \\
			&= (\hat {\mathcal{H}}_{n_xp}^{-} - {\mathcal{H}}_{n_xp}^{-}) + \bm{a} (\hat {{\mathcal{H}}}_{n_xp}^{+} - {{\mathcal{H}}}_{n_xp}^{+}) \\
			&= \begin{bmatrix}\bm{a}&1\end{bmatrix} (\hat {{\mathcal{H}}}_{n_xp} - {{\mathcal{H}}}_{n_xp}).
		\end{split}
	\end{equation}
	Since $(\hat {{\mathcal{H}}}_{n_xp} - {{\mathcal{H}}}_{n_xp})$ is a Hankel matrix, we rewrite \eqref{E39} as
	\begin{equation} \label{E40}
		\begin{bmatrix}\bm{a}&1\end{bmatrix}  (\hat {{\mathcal{H}}}_{n_xp} - {{\mathcal{H}}}_{n_xp}) =  (\hat {\bm{g}}_n - {\bm{g}_n}){\mathcal{T}}(\bm{a}),
	\end{equation}
	where  $\mathcal{T}(\bm{a}) \in \mathbb{R}^{n\times(n-n_x)}$ is a Toeplitz matrix with compatible dimension, having $\begin{bmatrix}a_{n_x}&a_{n_x-1}&\cdots&1&0&\cdots &0\end{bmatrix}^\top$ as its first column and $\begin{bmatrix}a_{n_x}&0&\cdots&0\end{bmatrix}$ as its first row. Based on \eqref{E38} and \eqref{E40}, we further conclude that the asymptotic distribution of the residual $(\hat {\bm{g}}_n - {\bm{g}_n}){\mathcal{T}}(\bm{a})$ is
	\begin{equation} \label{E41}
		\sqrt{N}(\hat {\bm{g}}_n - {\bm{g}_n}){\mathcal{T}}(\bm{a}) \sim \text{As}\mathcal{N}\left(0,{\mathcal{T}^\top}(\bm{a})P_{g}{\mathcal{T}}(\bm{a}) \right).
	\end{equation}
	Taking $W(\bm{a}) = \left({\mathcal{T}^\top}(\bm{a})P_g{\mathcal{T}}(\bm{a})\right)^{-1}$ as the optimal weighting for WLS, we obtain the optimal estimate 
	\begin{equation} \label{E42}
		\hat{\bm{a}}_{\text{wls}} = -{\hat {\mathcal{H}}}_{n_xn}^{-}W(\bm{a})({\hat {\mathcal{H}}_{n_xn}^{+}})^{\top}\left({\hat {\mathcal{H}}_{n_xn}^{+}} W(\bm{a})({\hat {\mathcal{H}}_{n_xn}^{+}})^{\top}\right)^{-1}.
	\end{equation}
	
	Although the optimal weighting $W(\bm{a})$ depends on the true value of $\bm{a}$, as demonstrated in \cite{Galrinho2018parametric}, replacing $\bm{a}$ in $W(\bm{a})$ with its estimate  $\hat{\bm{a}}_{\text{ols}}$ or $\hat{\bm{a}}_{\text{tls}}$ will not affect the asymptotic optimality of $\hat{\bm{a}}_{\text{wls}}$. This results in a two-step least squares method: the first step obtains an initial solution using TLS or OLS, and the second step refines this initial estimate using WLS. Moreover, it is possible to continue iterating this procedure, which may improve the estimate for finite samples. 
	
	\textbf{Experiment 4}: To demonstrate the performance of WLS w.r.t. OLS and TLS, here we use the two systems in Experiment~1 to perform simulations. For System~1, we already know that OLS gives better performance than TLS, while for System~2, TLS gives better performance than OLS. To simplify the comparison, we evaluate WLS against OLS for System~1 and against TLS for System~2. The results are presented in Figure~\ref{F4}. As shown from the figure, WLS performs competitively with the best method for each system.
	
	\begin{figure}
		\centering
		\includegraphics[scale=0.5]{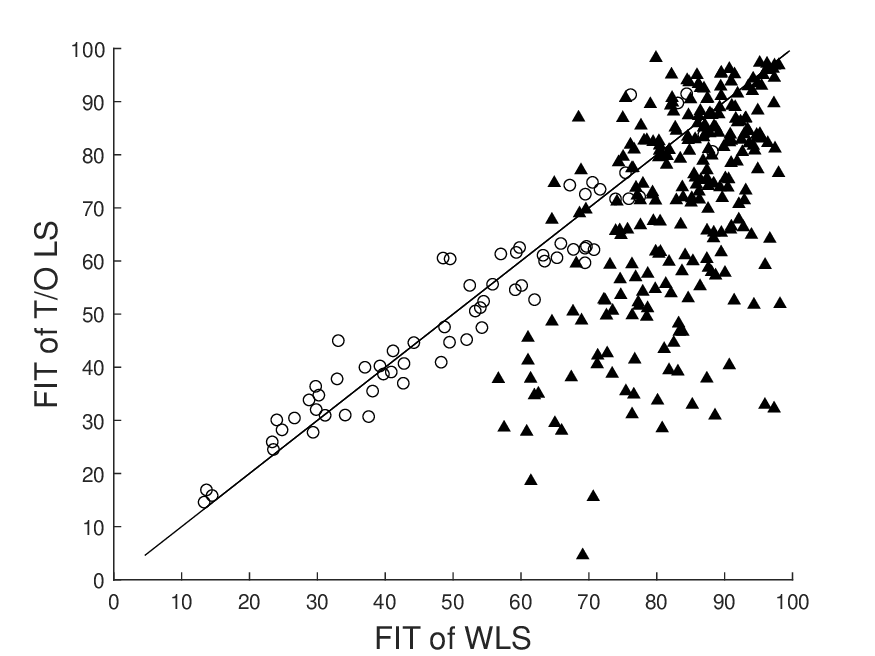}
		\caption{FITs of impulse responses from 200 Monte Carlo trials: System 1 ($\circ$, OLS VS WLS) and System 2 ($\blacktriangle$, TLS VS WLS), and the solid line is a bisector line.}
		\label{F4}
	\end{figure}
	
	\section{Statistical Analysis} \label{Sct5}
	
	In this section, we provide a statistical analysis for the three least-squares algorithms. We start with the estimates of the Markov parameters. Suppose that the first $n$ Markov parameters $\bm{g}_n$ in \eqref{E37} are estimated using $N$ data samples. We make the following assumptions.
	\begin{assumption}  \label{Asp4}
		We let the number of Markov parameters $n$ depends on the sample size $N$ according to the following conditions\footnote{In this assumption, $n$ is denoted by $n(N)$ to highlight the dependency of $n$ on $N$, whereas for simplicity, such a dependence is concealed in other parts of the paper.}:
		
		(1) $n(N) \to \infty$ as $N \to \infty$.
		
		(2) $n^{4+\delta}(N)/N \to 0$ for some $\delta>0$, as $N \to \infty$.
		
		(3) $\sqrt{N}d(N) \to 0$ as $N \to \infty$, where $d(N):= \sum_{k=n(N)+1}^{\infty} \norm{C{A^{k- 1}}B}$.	
	\end{assumption}
	
	\begin{assumption}  \label{Asp2}
		The estimates of the Markov parameters are consistent, i.e.,
		\begin{equation} \label{E43}
			{\hat {\bm{g}}}_n - \bm{g}_n \to 0, {\rm{as}} \ N \to \infty \ {\rm{w.p.1}},
		\end{equation}
		where $\text{w.p.1}$ is an abbreviation for 'with probability one'.
	\end{assumption}
	
	\begin{assumption}  \label{Asp3}
		The estimates of the Markov parameters are asymptotically normal, i.e.,
		\begin{equation} \label{E44}
			\sqrt{N}\left({\hat {\bm{g}}}_n - \bm{g}_n\right) \sim \text{As}\mathcal{N}\left(0,P_g\right).
		\end{equation}
	\end{assumption}

	\begin{remark} \label{Rmk51}
		Assumption \ref{Asp4} ensures that the number of Markov parameters, $n(N)$, grows at a suitable rate with $N$, so that these parameters constitute an asymptotically sufficient statistic. In particular, the first condition ensures that the growth of $n(N)$ is not too slow, while the second condition ensures that the growth of $n(N)$ is not too fast. In principle, one can take $n=\beta\text{log}N$, where $\beta>0$, to satisfy these two conditions for sufficiently large $N$. Moreover, for the third condition, since $\rho(A)<1$, we have $\norm{A^{n(N)}} = \mathcal{O}(\rho^{n(N)}) = \mathcal{O}(N^{-\beta/{\rm{log}(1/\rho)}})$, where $\rho(A) < \rho <1$. In this way, the third condition will be satisfied for a large enough $\beta$. In practice though, $n(N)$ can be determined by minimizing the prediction errors of the estimated state-space model as proposed in \cite{Galrinho2018parametric} for other models estimated with the so-called WNSF approach.
	\end{remark}
	
	\begin{remark} \label{Rmk5}
		The Markov parameters are typically estimated using least-squares methods applied to an FIR or ARX model. It is well known that if the model order tends to infinity at the rate suggested in Assumption \ref{Asp4}, the impulse responses estimates are consistent and asymptotically normal distributed \cite{Ljung1992asymptotic}. Therefore, Assumptions \ref{Asp2} and \ref{Asp3} are reasonable and not overly restrictive.
	\end{remark}
	
	\subsection{Consistency} \label{Sct5.1}
	
	The consistency of the three least-squares solutions is presented in the following theorem:
	\begin{theorem} \label{Thm2}
		Under Assumptions \ref{Asp1}, \ref{Asp4} and \ref{Asp2}, the estimates $\hat{\bm{a}}_{\text{ols}}$, $\hat{\bm{a}}_{\text{tls}}$ and $\hat{\bm{a}}_{\text{wls}}$ are consistent, i.e.,
		\begin{subequations} \label{E45}
			\begin{align}
				\hat{\bm{a}}_{\text{ols}} - \bm{a} &\to 0, {\rm{as}} \ N \to \infty \ {\rm{w.p.1}}, \\
				\hat{\bm{a}}_{\text{tls}} - \bm{a} &\to 0, {\rm{as}} \ N \to \infty \ {\rm{w.p.1}}, \\
				\hat{\bm{a}}_{\text{wls}} - \bm{a} &\to 0, {\rm{as}} \ N \to \infty \ {\rm{w.p.1}}.
			\end{align}
		\end{subequations}
	\end{theorem}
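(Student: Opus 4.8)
The plan is to reduce all three claims to a single continuity argument. The key observation is that when the \emph{true} Markov parameters are used, each estimator returns exactly $\bm{a}$: the identity $\bm{a}\mathcal{H}_{n_xp}^{+} + \mathcal{H}_{n_xp}^{-} = 0$ from \eqref{E14} makes the OLS formula \eqref{E26} collapse to $\bm{a}$; the noise-free Hankel matrix has rank $n_x$, so $\sigma_{n_x+1} = 0$ and the TLS formula \eqref{E30} coincides with \eqref{E26}; and for any invertible weighting the WLS formula \eqref{E42} likewise returns $\bm{a}$. Hence consistency follows once I show that the data-dependent matrix products entering \eqref{E26}, \eqref{E30} and \eqref{E42} converge w.p.1 to their noise-free limits, together with invertibility of the limiting Gram matrices; the solution maps are then continuous at the limit.

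First I would treat OLS. Writing $\hat{\mathcal{H}}_{n_xp}^{+} = \mathcal{H}_{n_xp}^{+} + \tilde{\mathcal{H}}_{n_xp}^{+}$, I expand
\[
\hat{\mathcal{H}}_{n_xp}^{+}(\hat{\mathcal{H}}_{n_xp}^{+})^{\top} - \mathcal{H}_{n_xp}^{+}(\mathcal{H}_{n_xp}^{+})^{\top} = \tilde{\mathcal{H}}_{n_xp}^{+}(\mathcal{H}_{n_xp}^{+})^{\top} + \mathcal{H}_{n_xp}^{+}(\tilde{\mathcal{H}}_{n_xp}^{+})^{\top} + \tilde{\mathcal{H}}_{n_xp}^{+}(\tilde{\mathcal{H}}_{n_xp}^{+})^{\top}.
\]
Since the error entries are all copies of $\hat{g}_i - g_i$, the Hankel structure gives $\norm{\tilde{\mathcal{H}}_{n_xp}^{+}}_F^2 \le n_x\norm{\hat{\bm{g}}_n - \bm{g}_n}^2$, while stability ($\rho(A)<1$ under Assumption \ref{Asp1}) ensures $\norm{\mathcal{H}_{n_xp}^{+}}_F$ is bounded uniformly in $p$ because $\sum_i g_i^2 < \infty$. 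Cauchy--Schwarz then bounds each term on the right by a multiple of $\norm{\hat{\bm{g}}_n - \bm{g}_n}$, which vanishes w.p.1 under Assumptions \ref{Asp2} and \ref{Asp4}, the rate conditions guaranteeing that the aggregate error norm tends to zero despite the growing number of Markov parameters. The same estimate applies to $\hat{\mathcal{H}}_{n_xp}^{-}(\hat{\mathcal{H}}_{n_xp}^{+})^{\top}$. For invertibility I factor $\mathcal{H}_{n_xp}^{+} = \Gamma\,\mathcal{C}_p$ with $\Gamma = [C^\top,(CA)^\top,\dots,(CA^{n_x-1})^\top]^\top$ square and nonsingular (observability), so that $\mathcal{H}_{n_xp}^{+}(\mathcal{H}_{n_xp}^{+})^{\top} = \Gamma\,\mathcal{C}_p\mathcal{C}_p^{\top}\,\Gamma^{\top}$ increases monotonically to $\Gamma P_c\Gamma^{\top}\succ 0$, with $P_c$ the controllability Gramian; hence its smallest eigenvalue is bounded away from zero for all large $N$. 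Continuity of matrix inversion then yields $\hat{\bm{a}}_{\text{ols}} \to \bm{a}$ w.p.1.

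For TLS I would use \eqref{E34}, which writes $\hat{\bm{a}}_{\text{tls}} - \hat{\bm{a}}_{\text{ols}}$ as $\hat\sigma_{n_x+1}^2$ times a factor that the previous step shows to be bounded; it remains to prove $\hat\sigma_{n_x+1}\to 0$. By Weyl's perturbation inequality and $\sigma_{n_x+1}=0$ (true rank $n_x$), $\hat\sigma_{n_x+1} = |\hat\sigma_{n_x+1}-\sigma_{n_x+1}| \le \norm{\tilde{\mathcal{H}}_{n_xp}}_F \le \sqrt{n_x+1}\,\norm{\hat{\bm{g}}_n - \bm{g}_n} \to 0$. Combined with the boundedness of the inverse Gram matrix already established, this gives $\hat{\bm{a}}_{\text{tls}} - \hat{\bm{a}}_{\text{ols}} \to 0$, and OLS-consistency then forces $\hat{\bm{a}}_{\text{tls}} \to \bm{a}$.

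The WLS case is where I expect the main obstacle to lie. The estimator \eqref{E42} uses the weighting $W(\bm{a}) = (\mathcal{T}^\top(\bm{a})P_g\mathcal{T}(\bm{a}))^{-1}$, a matrix of growing dimension $(n-n_x)$ that depends on the unknown $\bm{a}$ and on $P_g$; in practice it is replaced by an estimate $\hat W$ built from $W(\hat{\bm{a}}_{\text{ols}})$ or $W(\hat{\bm{a}}_{\text{tls}})$ and an estimate of $P_g$. One must show that the weighted Gram matrix $\hat{\mathcal{H}}_{n_xn}^{+}\hat W(\hat{\mathcal{H}}_{n_xn}^{+})^{\top}$ still converges w.p.1 to an invertible limit. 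Unlike the OLS case, the crude Cauchy--Schwarz bounds no longer suffice, because a growing, data-dependent weighting can amplify the errors; the argument therefore hinges on controlling the operator norm and conditioning of $\hat W$ uniformly in $n$ and on invoking the rate conditions in Assumption \ref{Asp4} to ensure the accumulated perturbation still vanishes faster than the weighting grows. This is precisely the growing-dimension weighted least-squares analysis carried out for the WNSF method in \cite{Galrinho2018parametric} and adapted to state-space models in \cite{He2024weighted}, and I would invoke that machinery, together with the already-established consistency of the plug-in initial estimate, to conclude $\hat{\bm{a}}_{\text{wls}} \to \bm{a}$.
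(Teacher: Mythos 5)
Your proposal is correct, and for the one estimator the paper actually analyzes in detail it takes a genuinely different route. The paper disposes of $\hat{\bm{a}}_{\text{ols}}$ and $\hat{\bm{a}}_{\text{wls}}$ by deferring entirely to the analysis in \cite{Galrinho2018parametric}, and then proves TLS consistency directly: it expands $\hat{\bm{a}}_{\text{tls}}-\bm{a}$ from \eqref{E30} and uses the Hankel--Toeplitz identity \eqref{E40}, i.e. $\hat{\mathcal{H}}_{n_xp}^{-}+\bm{a}\hat{\mathcal{H}}_{n_xp}^{+}=(\hat{\bm{g}}_n-\bm{g}_n)\mathcal{T}(\bm{a})$, to isolate a term driven by $\tilde{\bm{g}}_n\mathcal{T}(\bm{a})$ plus a term proportional to $\hat\sigma_{n_x+1}^2$, after which block-norm bounds, continuity of singular values, and the product/pseudo-inverse perturbation lemmas (Lemmas \ref{LemG2}, \ref{LemG3}) finish the argument. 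You instead prove OLS consistency self-containedly (Gram-matrix perturbation bounds plus the factorization $\mathcal{H}_{n_xp}^{+}=\mathcal{O}_{n_x-1}\mathcal{C}_p$ with the controllability Gramian limit to secure a uniformly invertible limit --- the same factorization the paper uses in Appendix B for a different purpose), and then obtain TLS consistency as a corollary of OLS consistency via the exact relation \eqref{E34} together with Weyl's inequality giving $\hat\sigma_{n_x+1}\le\norm{\tilde{\mathcal{H}}_{n_xp}}\to 0$; this reduction is shorter and more elementary than the paper's expansion, and it also makes the OLS part independent of the cited reference. What the paper's heavier route buys is reusability: the decomposition \eqref{CE4} is precisely what is recycled in the asymptotic normality proof of Theorem \ref{Thm3}, where the dominant term $-\tilde{\bm{g}}_n\mathcal{T}(\bm{a})(\hat{\mathcal{H}}_{n_xp}^{+})^{\top}\bigl(\hat{\mathcal{H}}_{n_xp}^{+}(\hat{\mathcal{H}}_{n_xp}^{+})^{\top}\bigr)^{-1}$ carries the limiting distribution and the $\hat\sigma_{n_x+1}^2$ term is shown to be $\mathcal{O}(1/N)$, whereas your consistency-only reduction would have to be redone for normality. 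For WLS you and the paper are at the same level of rigor --- both defer to the growing-dimension WLS machinery of \cite{Galrinho2018parametric} --- and your identification of the data-dependent, growing-dimension weighting as the genuine obstacle there is accurate.
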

	\begin{proof}
		The proof for the consistency of $\hat{\bm{a}}_{\text{ols}}$ and $\hat{\bm{a}}_{\text{wls}}$ is equivalent to the statistical analysis in \cite{Galrinho2018parametric}, where the Markov parameters are estimated using OLS applied to a high order ARX model, such that Assumption \ref{Asp2} is satisfied. For the consistency of $\hat{\bm{a}}_{\text{tls}}$, see Appendix \ref{App3}.
	\end{proof}
	
	\subsection{Asymptotic Normality} \label{Sct5.2}
	
	The asymptotic normality of the three least-squares solutions is presented in the following theorem:
	\begin{theorem} \label{Thm3}
		Under Assumptions \ref{Asp1}, \ref{Asp4}, \ref{Asp2} and \ref{Asp3}, the estimates $\hat{\bm{a}}_{\text{ols}}$, $\hat{\bm{a}}_{\text{tls}}$ and $\hat{\bm{a}}_{\text{wls}}$ are asymptotically normal, i.e.,
		\begin{subequations} \label{E46}
			\begin{align}
				\sqrt{N}\left(\hat{\bm{a}}_{\text{ols}} - \bm{a}\right) &\sim \text{As}\mathcal{N}\left(0,P_{a,\text{ols}}\right), \\
				\sqrt{N}\left(\hat{\bm{a}}_{\text{tls}} - \bm{a}\right) &\sim \text{As}\mathcal{N}\left(0,P_{a,\text{tls}}\right), \\
				\sqrt{N}\left(\hat{\bm{a}}_{\text{wls}} - \bm{a}\right) &\sim \text{As}\mathcal{N}\left(0,P_{a,\text{wls}}\right),
			\end{align}
		\end{subequations}
		where 
		\begin{subequations} \label{E47}
			\begin{align}
				P_{a,\text{ols}} =& P_{a,\text{tls}} = ({{\mathcal{H}}}_{n_xp}^{+})^{\dagger}(W(\bm{a}))^{-1}(({{\mathcal{H}}}_{n_xp}^{+})^{\dagger})^{\top}, \\
				P_{a,\text{wls}} =&\left({\mathcal{H}}_{n_xp}^{+}{W(\bm{a})}({\mathcal{H}}_{n_xp}^{+})^{\top}\right)^{-1}.
			\end{align}
		\end{subequations}
		Moreover, we have that
		\begin{equation} \label{E48}
			P_{a,\text{ols}}=P_{a,\text{tls}} \succcurlyeq  P_{a,\text{wls}}.
		\end{equation}
	\end{theorem}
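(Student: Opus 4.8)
The plan is to treat all three estimators by a single first-order (delta-method) expansion around the true coefficient vector $\bm{a}$, and to read off the asymptotic covariances from the already-established limiting law of the Markov-parameter error. Throughout I abbreviate $\mathcal{H}^{\pm}:=\mathcal{H}_{n_xp}^{\pm}$ and set $e_N:=\sqrt{N}(\hat{\bm{g}}_n-\bm{g}_n)^\top$, which by \assref{Asp3} is asymptotically $\mathcal{N}(0,P_g)$. The starting point for OLS and WLS is the (weighted) normal equation $(\hat{\bm{a}}\hat{\mathcal{H}}^{+}+\hat{\mathcal{H}}^{-})W(\hat{\mathcal{H}}^{+})^\top=0$, with $W=I$ for OLS and $W=W(\bm{a})$ for WLS. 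Writing $\delta\bm{a}:=\hat{\bm{a}}-\bm{a}$ and linearizing, the residual at the solution is, to first order, $\delta\bm{a}\,\mathcal{H}^{+}+\begin{bmatrix}\bm{a}&1\end{bmatrix}\tilde{\mathcal{H}}_{n_xp}$, and by the Toeplitz identity \eqref{E40} the second term equals $(\hat{\bm{g}}_n-\bm{g}_n)\mathcal{T}(\bm{a})$. Consistency (\theref{Thm2}) guarantees $\delta\bm{a}\to0$ and $\hat{\mathcal{H}}^{+}\to\mathcal{H}^{+}$, so every product of two error factors is of higher order and can be dropped.

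Solving the linearized normal equation then gives, for OLS,
\begin{equation*}
\sqrt{N}\,\delta\bm{a}^\top=-\bigl(\mathcal{H}^{+}(\mathcal{H}^{+})^\top\bigr)^{-1}\mathcal{H}^{+}\mathcal{T}^\top(\bm{a})\,e_N+o_p(1),
\end{equation*}
and, for WLS, the same expression with $\mathcal{H}^{+}$ replaced by $\mathcal{H}^{+}W(\bm{a})$ in the appropriate places. Since $\mathcal{T}^\top(\bm{a})P_g\mathcal{T}(\bm{a})=(W(\bm{a}))^{-1}$ by the definition of the optimal weighting, taking covariances produces the sandwich form for $P_{a,\text{ols}}$, which, using $(\mathcal{H}^{+})^\dagger=(\mathcal{H}^{+})^\top(\mathcal{H}^{+}(\mathcal{H}^{+})^\top)^{-1}$ for the full-row-rank $\mathcal{H}^{+}$, is precisely the expression in \eqref{E47}; for WLS the middle factor telescopes via $W(\bm{a})(W(\bm{a}))^{-1}W(\bm{a})=W(\bm{a})$, leaving exactly $P_{a,\text{wls}}=(\mathcal{H}^{+}W(\bm{a})(\mathcal{H}^{+})^\top)^{-1}$. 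Slutsky's theorem upgrades both to the stated normal limits. For the WLS step one invokes the remark following \eqref{E42}: replacing $\bm{a}$ in $W(\bm{a})$ by a consistent initial estimate perturbs the weighting by $o_p(1)$ and leaves the limiting covariance unchanged.

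For the TLS estimator I would avoid a separate stationarity analysis and instead exploit \eqref{E34}, which gives $\hat{\bm{a}}_\text{tls}-\hat{\bm{a}}_\text{ols}=-\hat\sigma_{n_x+1}^2\,\hat{\bm{a}}_\text{ols}(\hat{\mathcal{H}}^{+}(\hat{\mathcal{H}}^{+})^\top-\hat\sigma_{n_x+1}^2I)^{-1}$. Because $\mathcal{H}_{n_xp}$ has rank $n_x$, its $(n_x{+}1)$-th singular value is zero, so a singular-value perturbation (Weyl) bound yields $\hat\sigma_{n_x+1}\le\norm{\tilde{\mathcal{H}}_{n_xp}}=O_p(N^{-1/2})$, while $\hat{\bm{a}}_\text{ols}$ is bounded and $(\hat{\mathcal{H}}^{+}(\hat{\mathcal{H}}^{+})^\top-\hat\sigma_{n_x+1}^2I)^{-1}\to(\mathcal{H}^{+}(\mathcal{H}^{+})^\top)^{-1}$. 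Hence $\hat{\bm{a}}_\text{tls}-\hat{\bm{a}}_\text{ols}=O_p(N^{-1})$, so $\sqrt{N}(\hat{\bm{a}}_\text{tls}-\hat{\bm{a}}_\text{ols})\to0$ and the two estimators share the same limit law; in particular $P_{a,\text{tls}}=P_{a,\text{ols}}$.

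Finally, for the ordering \eqref{E48} I would run a Gauss--Markov argument on the linearized maps. Writing $M:=\mathcal{H}^{+}$, $R:=(W(\bm{a}))^{-1}$ and $r:=\mathcal{T}^\top(\bm{a})e_N\sim\mathcal{N}(0,R)$, both the OLS gain $G=(MM^\top)^{-1}M$ and the WLS gain $K_\star=(MR^{-1}M^\top)^{-1}MR^{-1}$ satisfy $GM^\top=K_\star M^\top=I$, and $P_{a,\bullet}=K R K^\top$. Decomposing $G=K_\star+D$ with $DM^\top=0$, the cross term vanishes since $K_\star R D^\top=(MR^{-1}M^\top)^{-1}MD^\top=0$, giving $P_{a,\text{ols}}=K_\star R K_\star^\top+DRD^\top=P_{a,\text{wls}}+DRD^\top\succcurlyeq P_{a,\text{wls}}$ because $R\succ0$. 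I expect the main obstacle to be not these algebraic steps but the justification of the linearization uniformly as $n=n(N)\to\infty$: the matrices $\mathcal{H}^{+}$, $\mathcal{T}(\bm{a})$ and $P_g$ all grow with $N$, so one must show that the relevant products converge (which they do, since $\rho(A)<1$ makes the Markov parameters and associated Gramians summable) and that the discarded quadratic remainder is $o_p(N^{-1/2})$. This is exactly where the rate conditions of \assref{Asp4}, in particular $n^{4+\delta}(N)/N\to0$, enter, and carrying this bookkeeping through a growing-dimension central limit theorem is the delicate part of the argument.
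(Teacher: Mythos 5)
Your proposal is correct and establishes all three claims, but parts of it take a genuinely different route from the paper. For TLS, the paper uses the same reduction you do—its expansion \eqref{CE4} is exactly your combination of the OLS identity with \eqref{E34}—but it controls the correction term by invoking a distributional result for singular values of estimated low-rank matrices (Lemma \ref{LemG6}) to get $\hat\sigma_{n_x+1}^2 = \mathcal{O}(1/N)$, whereas you use Weyl's perturbation inequality together with the exact rank deficiency of the true Hankel matrix ($\sigma_{n_x+1}(\mathcal{H}_{n_xp})=0$, so $\hat\sigma_{n_x+1}\le\norm{\tilde{\mathcal{H}}_{n_xp}}$). Your route is more elementary and is sufficient, since only a rate—not a limit law—for $\hat\sigma_{n_x+1}$ is needed. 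For the ordering \eqref{E48}, the paper writes $P_{a,\text{ols}}-P_{a,\text{wls}}$ as a congruence of a matrix $\Gamma$ and proves $\Gamma\succcurlyeq 0$ by a Schur-complement argument, factoring a $2\times 2$ block matrix through $\begin{bmatrix}W^{-1}(\bm{a})&I\\I&W(\bm{a})\end{bmatrix}\succcurlyeq 0$; your Gauss--Markov decomposition $G=K_\star+D$ with $DM^\top=0$ and vanishing cross term is the classical alternative, equally valid and arguably cleaner. For the OLS and WLS limit laws, the paper simply defers to the WNSF analysis of Galrinho et al., whereas you reconstruct the linearization explicitly (note that for OLS your "first-order" identity $\delta\bm{a}\,\hat{\mathcal{H}}^{+}(\hat{\mathcal{H}}^{+})^\top=-\tilde{\bm{g}}_n\mathcal{T}(\bm{a})(\hat{\mathcal{H}}^{+})^\top$ is in fact exact, so the only approximation is replacing $\hat{\mathcal{H}}^{+}$ by $\mathcal{H}^{+}$); this is essentially what the cited reference does, so here you are reproducing rather than diverging. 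The caveat you flag about uniformity of the argument as $n(N)\to\infty$ applies equally to the paper's own proof, which applies the fixed-dimension Lemma \ref{LemG6} and delegates the growing-dimension bookkeeping to the same reference, so your proposal is not less rigorous than the paper on this point.
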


	\begin{proof}
		The proof for the asymptotic normality of $\hat{\bm{a}}_{\text{ols}}$ and $\hat{\bm{a}}_{\text{wls}}$ is equivalent to the statistical analysis in \cite{Galrinho2018parametric}. For the asymptotic normality of $\hat{\bm{a}}_{\text{tls}}$, see Appendix \ref{App3}.    
	\end{proof}
	
	\begin{remark}
		Although Theorems \ref{Thm2} and \ref{Thm3} indicate that under Assumptions \ref{Asp4},  \ref{Asp2} and \ref{Asp3}, $\hat{\bm{a}}_{\text{ols}}$ and $\hat{\bm{a}}_{\text{tls}}$ are asymptotically equivalent, their performance is generally different for finite sample settings, as discussed earlier.
	\end{remark}
	
	\begin{remark}
		Theorem \ref{Thm3} shows that WLS yields a smaller variance than OLS and TLS, which sheds new light on pursuing an asymptotically efficient SIM, a long-standing open problem in system identification. To be specific, as a bi-level problem, if the second-level estimate $\hat{\bm{a}}_{\text{wls}}$ is to be asymptotically efficient, then the first-level estimate $\hat{\bm{g}}_n$ must be a sufficient statistic, at least as the sample size grows. This implies that achieving an asymptotically efficient realization requires the estimation of as many Markov parameters as possible in an efficient manner to ensure no loss of information. Besides the pre-estimation of certain Markov parameters, for the second-level estimate $\hat{\bm{a}}_{\text{wls}}$, the optimal weighting matrix $W(\bm{a})$ depends on the true value of $\bm{a}$. In practice, $\bm{a}$ in $W(\bm{a})$ can be replaced by its consistent estimate  $\hat{\bm{a}}_{\text{ols}}$ or $\hat{\bm{a}}_{\text{tls}}$, which will not affect the asymptotic optimality of $\hat{\bm{a}}_{\text{wls}}$.
	\end{remark}

	\section{Conclusion} \label{Sct6}
	
	This work revisits the classical approximate realization problem, where two prototypes of realization algorithms are compared within a least-squares framework. We show that the range-space-based method corresponds to a TLS solution, whereas the null-space-based method corresponds to an OLS solution. By examining the differences and sensitivities of these algorithms, we identify the conditions under which one method may be preferred over the other. Recognizing the suboptimality of both methods, we further propose that the optimal realization is a WLS solution. Moreover, statistical properties of these methods are presented, revealing that the TLS and OLS solutions are asymptotically equivalent, and the WLS solution gives a smaller variance. 
	
	Our work brings new insights into the classical approximate realization problem. First, our results reveal that for a well-conditioned realization problem, the better accuracy of TLS as opposed to OLS will be more pronounced when the ratio $\hat\sigma_{n_x}/\hat\sigma_{n_x}^{+}$ is larger. Second, we highlight several scenarios when range-space-based methods are likely to have a sensitive solution when the difference $\hat\sigma_{n_x}^{+}-\hat\sigma_{n_x+1}$ is small. Such situations may arise, for example, if matrix $A$ has a spectral radius $\rho(A)$ close to zero or if the system is of high order. Under these circumstances, the null-space-based methods might be a better option. Moreover, we point out that to achieve an asymptotically efficient realization method, the number of Markov parameters estimated in the first step should grows at a suitable rate with the sample size, serving as an asymptotically sufficient statistic. Then, WLS should be used in the second step, where the optimal weighting depends on the initial estimate and its covariance.
	
	In this work, we focused on SISO systems to simplify the analysis. Extending the WLS realization method to multiple-input single-output systems is straightforward, as they share the same parameterization for the left null space of the extended observability matrix as SISO systems. Furthermore, based on our recent work \cite{He2024weighted}, when the primary interest is in the $A$-matrix, a similar extension to multiple-output systems can be accomplished by building a block Hankel matrix. Future research will explore the application of these results to subspace identification.

	\section*{References}
	\bibliographystyle{IEEEtran}
	\bibliography{refs}

	\appendices 
	\numberwithin{equation}{section}
	
	\section{Proof of Theorem \ref{Thm0}} \label{App0}
	Notice that the extended observability matrix $\hat{\mathcal{O}}_{n_x}$ in \eqref{E31} satisfies the so-called backward shift property \cite{De2020least}:
	    \begin{equation} \label{AE1}
			\hat{\mathcal{O}}_{n_x}^{-}= \hat{\mathcal{O}}_{n_x}^{+} \hat A.
		\end{equation}
		Meanwhile, the characteristic coefficients $\left\{{\hat a}_i\right\}_{i=1}^{n_x}$ of $\hat A$ satisfy
		\begin{equation}  \label{AE2}
			\begin{bmatrix}
				{\hat a}_{n_x}&{\hat a}_{{n_x}-1}& \cdots &{\hat a}_1&1\end{bmatrix} {\hat{\mathcal{O}}}_{n_x} = 0.
		\end{equation}
		The above two equations suggest that the range space of the extended observability matrix ${\hat{\mathcal{O}}}_{n_x}$ is solely determined by the eigenvalues and their multiplicity structure, which simultaneously parameterize the null space of ${\hat{\mathcal{O}}}_{n_x}$. Since the singular vectors are orthogonal to each other, it is clear that the solution \eqref{E29} satisfies $\begin{bmatrix}\hat{\bm{a}}_\text{tls}&1\end{bmatrix}{\hat{\mathcal{O}}}_{n_x} = 0$. Moreover, as $\text{dim}\left(\mathcal{K}({\hat{\mathcal{O}}}_{n_x}^\top)\right)=1$, the solution to \eqref{AE2} is unique, so we conclude that the solution \eqref{E31} to the equation \eqref{AE1} is the same as the the solution \eqref{E29} to the equation \eqref{AE2}, up to a similarity transformation. \hfill$\blacksquare$
		
	\section{Proof of Lemma \ref{Lem0}} \label{App1}
	The first equality in \eqref{E34d1} comes from Definition~\ref{Def1}, where the vectors $\begin{bmatrix}\hat{\bm{a}}_\text{tls} & 1\end{bmatrix}$ and $\begin{bmatrix}\bm{a} & 1\end{bmatrix}$ uniquely span the subspaces $\mathcal{K}\left(\doublehat{\mathcal{H}}_{n_xp}^\top\right)$ and $\mathcal{K}\left(\mathcal{H}_{n_xp}^\top\right)$, respectively. The second equality in \eqref{E34d1} comes from the property that the distance defined in Definition~\ref{Def2} is invariant under the operation of taking orthogonal complements. Moreover, the last equality in \eqref{E34d1} comes from inequality (12) in \cite{Van1989accuracy}, which applies the generalized $\sin\theta$ theory \cite{Wedin1972perturbation} to $\hat{\mathcal{H}}_{n_xp}$ and ${\mathcal{H}}_{n_xp}$.
	The proof of \eqref{E34d2} is identical to that of \eqref{E34d1}. \hfill$\blacksquare$

	\section{Proof of Lemma \ref{Lem3}} \label{App2}
	Based on interlacing inequalities for singular values \cite{Golub2013matrix}, we have that for $i=1,2,...,n_x$,
		\begin{equation*}
			\hat{\sigma}_i \geq {\hat\sigma}_{i}^{+} \geq \hat{\sigma}_{i+1}.
		\end{equation*}
		Letting $i=n_x$, we have that ${\hat\sigma}_{n_x}^{+} - \hat\sigma_{n_x+1} \geq 0$. For the upper bound, since $\hat\sigma_{n_x+1}\geq 0$, we have that ${\hat\sigma}_{n_x}^{+} - \hat\sigma_{n_x+1} \leq {\hat\sigma}_{n_x}^{+}$. We now provide a upper bound for ${\hat\sigma}_{n_x}^{+}$, the smallest singular value of the Hankel matrix $\hat{\mathcal{H}}_{n_xp}^{+}$. It is straightforward to see that
		\begin{equation}  \label{BE1}
			\begin{split}
			   {\hat\sigma}_{n_x}^{+} 	&\leq \norm{\hat{\mathcal{H}}_{n_xp}^{+}}  \leq 
				\norm{{\mathcal{H}}_{n_xp}^{+}} + \norm{\tilde{\mathcal{H}}_{n_xp}^{+}} \\ &= \norm{\mathcal{O}_{n_x-1}\mathcal{C}_p} + \norm{\tilde{\mathcal{H}}_{n_xp}^{+}} \\ 
				&\leq  \norm{\mathcal{O}_{n_x-1}}\norm{\mathcal{C}_p} + \norm{\tilde{\mathcal{H}}_{n_xp}^{+}}.
			\end{split}	
		\end{equation}
		Moreover, using the norm inequality between a block matrix and its submatrices, $\norm{\mathcal{O}_{n_x-1}}$ and $\norm{\mathcal{C}_p}$ are bounded separately by
		\begin{subequations}  \label{BE2}
			\begin{align}
				\norm{\mathcal{O}_{n_x-1}} &\leq \sqrt{\sum_{k=0}^{n_x-1}\norm{CA^k}^2}\leq \norm{C}\sqrt{\sum_{k=0}^{n_x-1}\norm{A^k}^2},  \label{BE2a}\\
				\norm{\mathcal{C}_p} &\leq \sqrt{\sum_{k=0}^{p}\norm{BA^k}^2}\leq \norm{B}\sqrt{\sum_{k=0}^{p}\norm{A^k}^2}  \label{BE2b}.
			\end{align}	
		\end{subequations}
		Furthermore, utilizing the ratio between the exponents of the spectral norm and the square root of the spectral radius defined by $\Phi(A)$ \cite{Oymak2021revisiting}, we have that
		\begin{equation}  \label{BE3}
			\norm{A^k} \leq \Phi(A)\rho(A)^{k/2}.
		\end{equation}
		Substituting \eqref{BE3} into \eqref{BE2a} and \eqref{BE2b}, we have that
		\begin{subequations}  \label{BE4}
			\begin{align}
				\norm{\mathcal{O}_{n_x-1}} &\leq \norm{C}\Phi(A)
				\sqrt{\frac{1-\rho(A)^{n_x}}{{1-\rho(A)}}},   \label{BE4a}\\
				\norm{\mathcal{C}_p} &\leq \norm{B}\Phi(A)
				\sqrt{\frac{1-\rho(A)^{p+1}}{{1-\rho(A)}}}  \label{BE4b}.
			\end{align}	
		\end{subequations}
		After merging \eqref{BE4a} and \eqref{BE4b} together, we have
		\begin{equation*}
			\norm{\mathcal{O}_{n_x-1}}\norm{\mathcal{C}_p} \leq  \norm{C}\norm{B}\Phi(A)^2\frac{1-\rho(A)^{n-n_x}}{{1-\rho(A)}}.
		\end{equation*}
		Substituting the above inequality into \eqref{BE1}, the proof is completed. \hfill$\blacksquare$

	\section{Statistical Analysis}   \label{App3}
	
	\subsection{Auxiliary Results}
	To prove Theorems \ref{Thm2} and \ref{Thm3} for TLS, we introduce some auxiliary results.
	
	(a) $\norm{{\hat {\mathcal{H}}}_{n_xp} - {\mathcal{H}}_{n_xp}}  \to 0$, ${\rm{as}} \ N \to \infty$ $\text{w.p.1}$: Using the norm inequality of a block matrix in Lemma \ref{LemG1}, we have that
	\begin{equation} \label{CE1}
		\norm{{\hat {\mathcal{H}}}_{n_xp} - {\mathcal{H}}_{n_xp}} \leq \sqrt{n_x+1} \norm{\hat{\bm{g}}_n - \bm{g}_n}.
	\end{equation}
	Under Assumption \ref{Asp2}, $\norm{\hat{\bm{g}}_n - \bm{g}_n} \to 0$, as $N \to \infty \ {\text{w.p.1}}$, so we conclude that $\norm{{\hat {\mathcal{H}}}_{n_xp} - {\mathcal{H}}_{n_xp}}  \to 0$, ${\rm{as}} \ N \to \infty$ $\text{w.p.1}$. Moreover, since ${\mathcal{H}}_{n_xp}^{-}$ and ${\mathcal{H}}_{n_xp}^{+}$ are submatrices of ${\mathcal{H}}_{n_xp}$, we have that
	\begin{subequations} \label{CE2}
		\begin{align}
			&\norm{ {\hat {\mathcal{H}}}_{n_xn}^{-} - {\mathcal{H}}_{n_xn}^{-} } \to 0, {\rm{as}} \ N \to \infty \ {\text{w.p.1}}, \label{CE2a}\\
			&\norm{ {\hat {\mathcal{H}}}_{n_xn}^{+} - {\mathcal{H}}_{n_xn}^{+} }\to 0, {\rm{as}} \ N \to \infty \ {\text{w.p.1}} \label{CE2b}.
		\end{align}
	\end{subequations}
	
	(b) $\hat{\sigma}_{n_x+1} \to 0$, ${\rm{as}} \ N \to \infty$ $\text{w.p.1}$:
	According to the auxiliary result (a), we have $\norm{{\hat {\mathcal{H}}}_{n_xp} - {\mathcal{H}}_{n_xp}}  \to 0$, ${\rm{as}} \ N \to \infty$ $\text{w.p.1}$. Since ${\mathcal{H}}_{n_xp}$ is low-rank, using the continuity of eigenvalues, we conclude that $\hat{\sigma}_{n_x+1} \to 0$, ${\rm{as}} \ N \to \infty$ $\text{w.p.1}$.
	
	(c) $\norm{{\mathcal{H}}_{n_xp}}$ is bounded for all $n$: Similarly, using the norm inequality of a block matrix in Lemma \ref{LemG1}, we have that
	\begin{equation} \label{CE3}
		\norm{{\mathcal{H}}_{n_xp}} \leq \sqrt{n_x+1} \norm{\bm{g}_n}, \forall n.
	\end{equation}
	Under Assumption \ref{Asp1}, the system is asymptotically stable, which means that the Markov parameters $\left\{g_i = C{A^{i - 1}}B\right\}$ are exponentially decaying with $i$. This ensures that $\norm{\bm{g}_n}$ is bounded for all $n$, which further implies that $\norm{{\mathcal{H}}_{n_xp}}$ is bounded for all $n$.
	
	(d) ${\mathcal{T}}(\bm{a})$ is bounded for all $n$: See \cite[Th. 1]{Galrinho2018parametric}.
	
	\subsection{Proof of Theorem \ref{Thm2} for TLS (Consistency)}
	Define $\tilde{\bm{a}}_{\text{tls}} := \hat{\bm{a}}_{\text{tls}} - \bm{a}$ and $\tilde{\bm{g}}_{n} := \hat{\bm{g}}_{n} - \bm{g}_{n}$. We first have that
		\begin{equation} \label{CE4} 
			\begin{split}
				\hat{\bm{a}}_\text{tls} - \bm{a} =& \left(-\hat {\mathcal{H}}_{n_xp}^{-}(\hat {\mathcal{H}}_{n_xp}^{+})^\top-
				\bm{a}\hat {\mathcal{H}}_{n_xp}^{+}(\hat {\mathcal{H}}_{n_xp}^{+})^{\top}+\bm{a}\hat \sigma_{n_x+1}^2I\right)
				\\
				&\times\left(\hat {\mathcal{H}}_{n_xp}^{+}(\hat {\mathcal{H}}_{n_xp}^{+})^{\top}-\hat \sigma_{n_x+1}^2I\right)^{-1} \\
				= & -\tilde{{\bm{g}}}_n{\mathcal{T}}(\bm{a})({ \hat{\mathcal{H}}}_{n_xp}^{+})^{\top}\left(\hat{\mathcal{H}}_{n_xp}^{+}(\hat{\mathcal{H}}_{n_xp}^{+})^{\top} - {\hat\sigma}_{n_x+1}^2I\right)^{-1}
				\\
				&  +\bm{a}{\hat\sigma}_{n_x+1}^2\left(\hat{\mathcal{H}}_{n_xp}^{+}(\hat{\mathcal{H}}_{n_xp}^{+})^{\top} - {\hat\sigma}_{n_x+1}^2I\right)^{-1}.
			\end{split}
		\end{equation} 
		Using the triangular inequality, we have that
		\begin{equation} \label{CE4a} 
			\begin{split}
				\norm{\tilde{\bm{a}}_{\text{tls}}} \leq  &\norm{\tilde{{\bm{g}}}_n}\norm{{\mathcal{T}}(\bm{a})}\norm{{ \hat{\mathcal{H}}}_{n_xp}^{+}}\times \\
				&\norm{\left(\hat{\mathcal{H}}_{n_xp}^{+}(\hat{\mathcal{H}}_{n_xp}^{+})^{\top} - {\hat\sigma}_{n_x+1}^2I\right)^{-1}} +\norm{\bm{a}}\times 
				\\
				&\norm{{\hat\sigma}_{n_x+1}^2}\norm{\left(\hat{\mathcal{H}}_{n_xp}^{+}(\hat{\mathcal{H}}_{n_xp}^{+})^{\top} - {\hat\sigma}_{n_x+1}^2I\right)^{-1}}.
			\end{split}
		\end{equation}
		According to the auxiliary result (b), ${\hat\sigma}_{n_x+1}^2 \to 0$, ${\rm{as}} \ N \to \infty$ $\text{w.p.1}$. Moreover, according to the auxiliary results (a) and (c), $\norm{{\hat{\mathcal{H}}}_{n_xp}^{+} - {\mathcal{H}}_{n_xp}^{+}}  \to 0$, ${\rm{as}} \ N \to \infty$ $\text{w.p.1}$, and $\norm{{\mathcal{H}}_{n_xp}^{+}}$ is bounded. In this way, using Lemma \ref{LemG2}, we have that
		\begin{equation} \label{CE5}
			\norm{\hat{\mathcal{H}}_{n_xp}^{+}(\hat{\mathcal{H}}_{n_xp}^{+})^{\top} - {\hat\sigma}_{n_x+1}^2I - {\mathcal{H}}_{n_xp}^{+}({\mathcal{H}}_{n_xp}^{+})^{\top}} \to 0, 
		\end{equation}
		${\rm{as}} \ N \to \infty \ \text{w.p.1}$. Using Lemma \ref{LemG3}, we further have that
		\begin{equation} \label{CE6}
			\norm{\left(\hat{\mathcal{H}}_{n_xp}^{+}(\hat{\mathcal{H}}_{n_xp}^{+})^{\top} - {\hat\sigma}_{n_x+1}^2I\right)^{-1} - \left({\mathcal{H}}_{n_xp}^{+}({\mathcal{H}}_{n_xp}^{+})^{\top}\right)^{-1}}  \to 0,
		\end{equation}
		${\rm{as}} \ N \to \infty$, $\text{w.p.1}$. Moreover, auxiliary results (c) and (d) show that $\norm{{\mathcal{H}}_{n_xp}^{+}}$ and  $\norm{{\mathcal{T}}(\bm{a})}$ are bounded, and Assumption \ref{Asp2} implies that $\tilde{\bm{g}}_{n} \to 0$, ${\rm{as}} \ N \to \infty$ $\text{w.p.1}$. Putting them together, and combining \eqref{CE4a} and \eqref{CE6}, we therefore conclude that 
		\begin{equation}  \label{CE7}
			\tilde{\bm{a}}_{\text{tls}} \to 0, {\rm{as}} \ N \to \infty \ {\text{w.p.1}}.
		\end{equation}
	    The proof is completed. \hfill$\blacksquare$

	\subsection{Proof of Theorem \ref{Thm3} for TLS (Asymptotic Normality)}
	We first study the covariance of $\tilde{\bm{a}}_{\text{tls}} := \hat{\bm{a}}_{\text{tls}} - \bm{a}$ given in \eqref{CE4}. Under Assumption \ref{Asp3}, we have that $\sqrt{N}\text{Vec}\left({\hat {\mathcal{H}}}_{n_xp} - \mathcal{H}_{n_xp} \right)$ is asymptotically normally distributed, where $\text{Vec}(\cdot)$ means the vectorization of a matrix. According to Lemma \ref{LemG6}, we have that $\sqrt{N}{\hat\sigma}_{n_x+1}$ is 
		asymptotically normally distributed, which means that ${\hat\sigma}_{n_x+1}^2$ is $\mathcal{O}(1/N)$. Hence,  $\sqrt{N}{\hat\sigma}_{n_x+1}^2$ is dominated by the asymptotically normally distributed term $\sqrt{N}\tilde{{\bm{g}}}_T$. In this way, $\tilde{\bm{a}}_{\text{tls}}$ in \eqref{CE4} can be rewritten as
		\begin{equation*}
			\tilde{\bm{a}}_{\text{tls}} = -\tilde{{\bm{g}}}_n{\mathcal{T}}(\bm{a})({ \hat{\mathcal{H}}}_{n_xp}^{+})^{\top}\left(\hat{\mathcal{H}}_{n_xp}^{+}(\hat{\mathcal{H}}_{n_xp}^{+})^{\top}\right)^{-1} + \mathcal{O}(1/N).	
		\end{equation*}
		Using standard results in asymptotic analysis, the faster term $\mathcal{O}(1/N)$ can be neglected when studying the asymptotic distribution. Therefore, it is straightforward to see that $\tilde{\bm{a}}_{\text{tls}}$ has the same asymptotic distribution as the term $\tilde{{\bm{g}}}_n{\mathcal{T}}(\bm{a})({ \hat{\mathcal{H}}}_{n_xp}^{+})^{\top}\left(\hat{\mathcal{H}}_{n_xp}^{+}(\hat{\mathcal{H}}_{n_xp}^{+})^{\top}\right)^{-1}$, whose variance is shown to be $P_{a,\text{ols}} = ({{\mathcal{H}}}_{n_xp}^{+})^{\dagger}W^{-1}(\bm{a})(({{\mathcal{H}}}_{n_xp}^{+})^{\dagger})^{\top}$ \cite{Galrinho2018parametric}. 
		
		We now show that $P_{a,\text{ols}} \succcurlyeq  P_{a,\text{wls}}$.
		First, we have that
		\begin{equation*}
			P_{a,\text{ols}} - P_{a,\text{wls}} = \left({\mathcal{H}}_{n_xp}^{+}({\mathcal{H}}_{n_xp}^{+})^{\top}\right)^{-1}\Gamma \left({\mathcal{H}}_{n_xp}^{+}({\mathcal{H}}_{n_xp}^{+})^{\top}\right)^{-1},
		\end{equation*}
		where 
		\begin{equation*} \label{BE5}
			\begin{split}
				\Gamma =& {{\mathcal{H}}}_{n_xp}^{+}W(\bm{a})^{-1}({{\mathcal{H}}}_{n_xp}^{+})^{\top} - {\mathcal{H}}_{n_xp}^{+}({\mathcal{H}}_{n_xp}^{+})^{\top}\times\\
				&\left({\mathcal{H}}_{n_xp}^{+}{W(\bm{a})}({\mathcal{H}}_{n_xp}^{+})^{\top}\right)^{-1}{\mathcal{H}}_{n_xp}^{+}({\mathcal{H}}_{n_xp}^{+})^{\top}.
			\end{split}	
		\end{equation*}
		According to the Shur complement, $\Gamma \succcurlyeq 0$ is equivalent to 
		\begin{equation*}
			\begin{split}
				&\begin{bmatrix}
					{{\mathcal{H}}}_{n_xp}^{+}W(\bm{a})^{-1}({{\mathcal{H}}}_{n_xp}^{+})^{\top} &{\mathcal{H}}_{n_xp}^{+}({\mathcal{H}}_{n_xp}^{+})^{\top}\\
					{\mathcal{H}}_{n_xp}^{+}({\mathcal{H}}_{n_xp}^{+})^{\top}&{\mathcal{H}}_{n_xp}^{+}{W(\bm{a})}({\mathcal{H}}_{n_xp}^{+})^{\top}
				\end{bmatrix}\\
				& = \begin{bmatrix}
					{{\mathcal{H}}}_{n_xp}^{+}&0\\
					0&{{\mathcal{H}}}_{n_xp}^{+}
				\end{bmatrix}\begin{bmatrix}
					W(\bm{a})^{-1} &I\\
					I&W(\bm{a})
				\end{bmatrix}\begin{bmatrix}
					({\mathcal{H}}_{n_xp}^{+}) &0\\
					0&({\mathcal{H}}_{n_xp}^{+})
				\end{bmatrix}^{\top} \\
				&\succcurlyeq 0.
			\end{split}
		\end{equation*}
		Since $W(\bm{a}) \succcurlyeq 0$, we have $\begin{bmatrix}
			W(\bm{a})^{-1} &I\\
			I&W(\bm{a})
		\end{bmatrix} \succcurlyeq 0$. As a result, we have that $\Gamma \succcurlyeq 0$, which gives
		\begin{equation*}
			P_{a,\text{ols}} \succcurlyeq  P_{a,\text{wls}}.
		\end{equation*}
	The proof is completed. \hfill$\blacksquare$

	\section{Technical Lemmas}   \label{App5}
	
	\begin{lemma} [{\cite[Lemma A.1]{Tsiamis2019finite}}, Norm of a block matrix] \label{LemG1}
		Let $M$ be a block-column matrix defined as $M = \begin{bmatrix}
			M_1^\top&M_2^\top&\cdots &M_f^\top
		\end{bmatrix}^\top$, where all the $M_i$'s have the same dimension. Then, the block matrix $M$ satisfies
		\begin{equation*}
			\norm{M} \leq \sqrt{f} \max\limits_{1\leq i\leq f} \norm{M_i}.
		\end{equation*}
	\end{lemma}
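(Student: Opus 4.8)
The plan is to work directly from the variational characterization of the spectral norm, namely $\norm{M} = \max_{\norm{x}=1}\norm{Mx}$, where $x$ ranges over unit vectors whose dimension equals the common number of columns of the blocks $M_i$. This immediately reduces the desired matrix inequality to a statement about the image of a single fixed unit vector, which is much easier to control.

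First I would exploit the block-column structure. Since the blocks are stacked vertically, for any admissible $x$ we have $Mx = \begin{bmatrix} (M_1 x)^\top & (M_2 x)^\top & \cdots & (M_f x)^\top \end{bmatrix}^\top$, and therefore the Euclidean norm decomposes additively as $\norm{Mx}^2 = \sum_{i=1}^{f} \norm{M_i x}^2$. This identity is the crux of the whole argument and is the only step that actually uses the block-column hypothesis.

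Next I would bound each summand. For a unit vector $x$, submultiplicativity of the spectral norm gives $\norm{M_i x} \le \norm{M_i}\,\norm{x} = \norm{M_i}$, and replacing each $\norm{M_i}$ by the maximum over the blocks yields $\norm{Mx}^2 \le \sum_{i=1}^{f}\norm{M_i}^2 \le f\bigl(\max_{1\le i\le f}\norm{M_i}\bigr)^2$. Taking square roots, and then the maximum over all unit vectors $x$, delivers $\norm{M} \le \sqrt{f}\,\max_{1\le i\le f}\norm{M_i}$, which is the claim.

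There is no genuine obstacle here; the argument is elementary and self-contained. The only point that warrants a moment of care is dimensional consistency: the hypothesis that all the $M_i$ share the same dimension is exactly what guarantees that one and the same vector $x$ may multiply every block, which is what makes the additive decomposition of $\norm{Mx}^2$ meaningful in the first place.
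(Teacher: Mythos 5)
Your proof is correct. The paper itself does not prove this lemma at all---it is imported by citation from \cite{Tsiamis2019finite}---and your argument, the variational characterization $\norm{M}=\max_{\norm{x}=1}\norm{Mx}$ combined with the blockwise decomposition $\norm{Mx}^2=\sum_{i=1}^{f}\norm{M_i x}^2$ and the bound $\norm{M_i x}\leq\norm{M_i}$, is exactly the standard, self-contained proof of that cited result, so your write-up is complete as stated.
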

	
	\begin{lemma} [{\cite[Proposition 1]{Galrinho2018parametric}}] \label{LemG2}
		Consider the product $\prod_{i=1}^p \hat{M}_N^{(i)}$, where $p$ is finite and $\hat{M}_N^{(i)}$ are stochastic matrices of appropriate dimensions (possibly functions of $N$) such that
		\begin{equation*}
			\norm{\hat{M}_N^{(i)} - {M}_N^{(i)}} \to 0, {\rm{as}} \ N \to \infty \ {\text{w.p.1}}.
		\end{equation*}
		where ${M}_N^{(i)}$ is a deterministic matrix for each $N$ satisfying $\norm{{M}_N^{(i)}} < c_i$, whose dimensions match the dimensions of $\hat{M}_N^{(i)}$. Then, we have that
		\begin{equation*}
			\norm{\prod_{i=1}^p \hat{M}_N^{(i)} - \prod_{i=1}^p {M}_N^{(i)}} \to 0, {\rm{as}} \ N \to \infty \ {\text{w.p.1}}.
		\end{equation*}
	\end{lemma}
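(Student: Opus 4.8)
The plan is to proceed by induction on the number of factors $p$, working pathwise on the probability-one event where all the individual convergences $\norm{\hat{M}_N^{(i)} - {M}_N^{(i)}} \to 0$ hold simultaneously. Since there are only finitely many such events (one for each $i = 1, \dots, p$), their intersection still has probability one, so it suffices to establish the purely deterministic statement along each fixed sample path: if each factor converges to a deterministic, uniformly bounded sequence, then so does their product. The base case $p = 1$ is immediate.

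For the inductive step I would denote the partial products $\hat{P}_N := \prod_{i=1}^{p-1} \hat{M}_N^{(i)}$ and ${P}_N := \prod_{i=1}^{p-1} {M}_N^{(i)}$, and introduce the telescoping identity
\begin{equation*}
    \hat{P}_N \hat{M}_N^{(p)} - {P}_N {M}_N^{(p)} = \hat{P}_N \bigl(\hat{M}_N^{(p)} - {M}_N^{(p)}\bigr) + \bigl(\hat{P}_N - {P}_N\bigr){M}_N^{(p)}.
\end{equation*}
Applying the triangle inequality together with submultiplicativity of the spectral norm then yields
\begin{equation*}
    \norm{\hat{P}_N \hat{M}_N^{(p)} - {P}_N {M}_N^{(p)}} \leq \norm{\hat{P}_N}\,\norm{\hat{M}_N^{(p)} - {M}_N^{(p)}} + \norm{\hat{P}_N - {P}_N}\,\norm{{M}_N^{(p)}}.
\end{equation*}
By the induction hypothesis $\norm{\hat{P}_N - {P}_N} \to 0$, the deterministic factor obeys $\norm{{M}_N^{(p)}} < c_p$ by assumption, and the remaining difference $\norm{\hat{M}_N^{(p)} - {M}_N^{(p)}} \to 0$ by hypothesis, so each summand on the right will be shown to vanish.

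The one step that requires a small argument is the uniform boundedness of the \emph{stochastic} partial product $\norm{\hat{P}_N}$, which is not assumed directly. I would recover it from the induction hypothesis: since ${P}_N$ is a product of $p-1$ deterministic factors each bounded by $c_i$, we have $\norm{{P}_N} \leq \prod_{i=1}^{p-1} c_i$, and because $\norm{\hat{P}_N - {P}_N} \to 0$ on the fixed sample path, there is an index $N_0$ beyond which $\norm{\hat{P}_N - {P}_N} < 1$, giving $\norm{\hat{P}_N} \leq 1 + \prod_{i=1}^{p-1} c_i$ for all $N > N_0$. With $\norm{\hat{P}_N}$ thus uniformly bounded and both difference terms tending to zero, the right-hand side of the displayed inequality converges to zero, completing the induction.

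The argument is essentially routine and presents no serious obstacle; the only point demanding care is precisely this deduction of the boundedness of $\hat{P}_N$ from the induction hypothesis and the boundedness of its deterministic limit, rather than assuming it outright. Keeping the entire reasoning pathwise is also what allows the finitely many ``with probability one'' statements to combine cleanly into a single probability-one conclusion.
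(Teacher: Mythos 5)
Your proof is correct and complete: the reduction to a pathwise statement via a finite intersection of probability-one events, the induction with the telescoping identity, and in particular the recovery of a uniform bound on the stochastic partial product $\hat{P}_N$ from the induction hypothesis (rather than assuming it) are exactly the points that need care, and you handle each one properly. Note, however, that this paper never proves the lemma itself --- it is imported verbatim as Proposition 1 of \cite{Galrinho2018parametric} --- so there is no in-paper proof to compare against; your argument is the standard one that fills in that citation, and it matches in spirit the decomposition-plus-triangle-inequality proof given in the cited reference.
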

	
	\begin{lemma} [{\cite[Th. 4.1]{Wedin1973perturbation}}] \label{LemG3}
		Consider rank $m$ matrices $M_1\in \mathbb{R}^{m\times n}$ and $M_2\in \mathbb{R}^{m\times n}$, where $m\leq n$. Then, we have 
		\begin{equation}
			\nonumber
			\norm{M_1^\dagger-M_2^\dagger} \leq \sqrt{2}\norm{M_1^\dagger}\norm{M_2^\dagger}\norm{M_1-M_2}.
		\end{equation}
	\end{lemma}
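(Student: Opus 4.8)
The plan is to derive a two-term perturbation identity for the pseudoinverse, exploit an orthogonality between the two terms to obtain the factor $\sqrt2$ (instead of the $2$ that a bare triangle inequality would give), and then bound each term by $\norm{M_1^\dagger}\norm{M_2^\dagger}\norm{M_1-M_2}$. Write $E:=M_2-M_1$. Since $M_1,M_2\in\mathbb{R}^{m\times n}$ with $m\le n$ have full row rank $m$, both satisfy $M_iM_i^\dagger=I_m$, while $M_i^\dagger M_i$ is the orthogonal projector onto the row space $\mathcal{R}(M_i^\top)$. Starting from the algebraic identity $M_2^\dagger-M_1^\dagger=M_2^\dagger(I_m-M_2M_1^\dagger)-(I_n-M_2^\dagger M_2)M_1^\dagger$ (obtained by inserting $\pm M_2^\dagger M_2M_1^\dagger$) and using $M_1M_1^\dagger=I_m$ to write $I_m-M_2M_1^\dagger=(M_1-M_2)M_1^\dagger=-EM_1^\dagger$, I would get
\[
	M_2^\dagger-M_1^\dagger=-M_2^\dagger E M_1^\dagger-(I_n-M_2^\dagger M_2)M_1^\dagger=:T_1+T_2 .
\]
The column space of $T_1$ lies in $\mathcal{R}(M_2^\dagger)=\mathcal{R}(M_2^\top)$, whereas that of $T_2$ lies in $\mathcal{R}(I_n-M_2^\dagger M_2)=\mathcal{K}(M_2)=\mathcal{R}(M_2^\top)^\perp$. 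Hence $T_1x\perp T_2x$ for every $x$, so $\norm{(T_1+T_2)x}^2=\norm{T_1x}^2+\norm{T_2x}^2$, and taking the supremum over unit $x$ yields the Pythagorean bound $\norm{T_1+T_2}\le\sqrt{\norm{T_1}^2+\norm{T_2}^2}$. This is precisely where the constant $\sqrt2$ originates.

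The remaining task is to bound both terms by the same quantity $\norm{M_1^\dagger}\norm{M_2^\dagger}\norm{E}$. For $T_1$ this is immediate by submultiplicativity: $\norm{T_1}\le\norm{M_2^\dagger}\norm{E}\norm{M_1^\dagger}$. The term $T_2$ is the crux. I would first write $M_1^\dagger=(M_1^\dagger M_1)M_1^\dagger$, so that $T_2=-(I_n-M_2^\dagger M_2)(M_1^\dagger M_1)M_1^\dagger$ and therefore $\norm{T_2}\le\norm{(I_n-M_2^\dagger M_2)M_1^\dagger M_1}\,\norm{M_1^\dagger}$. The middle factor is exactly the gap between the two row spaces $\mathcal{R}(M_1^\top)$ and $\mathcal{R}(M_2^\top)$ in the sense of \defref{Def2}, i.e. $\norm{(I-P_2)P_1}$ with $P_i=M_i^\dagger M_i$.

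The decisive observation is that, since both row spaces have the \emph{same} dimension $m$, this gap is symmetric, $\norm{(I_n-M_2^\dagger M_2)M_1^\dagger M_1}=\norm{(I_n-M_1^\dagger M_1)M_2^\dagger M_2}$ (see \defref{Def2} and \cite{Golub2013matrix}). I would then bound the latter by measuring the perturbation against $M_2$: using $M_2=M_1+E$ together with $(I_n-M_1^\dagger M_1)M_1^\top=0$, one has $(I_n-M_1^\dagger M_1)M_2^\top=(I_n-M_1^\dagger M_1)E^\top$, whence $(I_n-M_1^\dagger M_1)M_2^\dagger M_2=(I_n-M_1^\dagger M_1)E^\top(M_2M_2^\top)^{-1}M_2$, giving $\norm{(I_n-M_1^\dagger M_1)M_2^\dagger M_2}\le\norm{E}\norm{(M_2M_2^\top)^{-1}M_2}=\norm{E}\norm{M_2^\dagger}$. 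Combining, $\norm{T_2}\le\norm{M_1^\dagger}\norm{M_2^\dagger}\norm{E}$, and substituting both bounds into the Pythagorean inequality yields $\norm{M_2^\dagger-M_1^\dagger}\le\sqrt2\,\norm{M_1^\dagger}\norm{M_2^\dagger}\norm{E}$, as claimed.

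The main obstacle is the bound on $T_2$. The naive estimate $\norm{(I_n-M_2^\dagger M_2)M_1^\dagger}\le\norm{(M_1M_1^\top)^{-1}}\norm{E}=\norm{M_1^\dagger}^2\norm{E}$ is off by a factor $\norm{M_1^\dagger}/\norm{M_2^\dagger}$ and does \emph{not} produce the stated constant; simple counterexamples confirm the true term scales like $\norm{M_1^\dagger}\norm{M_2^\dagger}$. The key is to recognize the offending factor as a row-space gap and to transfer it from $M_1$ to $M_2$ via the equal-dimension symmetry of principal angles, so that the perturbation $E$ is weighed against $M_2^\dagger$ rather than $M_1^\dagger$. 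I would also note that the argument is carried out in the spectral norm $\norm{\cdot}$, consistent with the notation of the paper.
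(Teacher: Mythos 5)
Your proof is correct. Note that the paper does not prove this lemma at all---it is stated with a citation to Wedin's Theorem 4.1 and used as an external technical tool---so there is no internal argument to compare against; what you have produced is a correct, self-contained reconstruction of essentially Wedin's equal-rank argument, specialized to the full-row-rank case. The key steps all check out: the identity $M_2^\dagger-M_1^\dagger=-M_2^\dagger EM_1^\dagger-(I_n-M_2^\dagger M_2)M_1^\dagger$ is valid once $M_1M_1^\dagger=I_m$ (full row rank is what kills the middle term of Wedin's general three-term decomposition, and it is also why the hypothesis $\mathrm{rank}(M_i)=m\le n$ matters); the ranges of $T_1$ and $T_2$ lie in $\mathcal{R}(M_2^\top)$ and $\mathcal{K}(M_2)=\mathcal{R}(M_2^\top)^\perp$ respectively, so the pointwise Pythagorean identity legitimately yields $\norm{T_1+T_2}\le\bigl(\norm{T_1}^2+\norm{T_2}^2\bigr)^{1/2}$, which is indeed the sole source of the constant $\sqrt{2}$. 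Your handling of $T_2$ is the genuinely nontrivial part, and you diagnose it accurately: the naive submultiplicative bound gives $\norm{M_1^\dagger}^2\norm{M_1-M_2}$, which is not the stated estimate. The repair is sound: writing $T_2=-(I-P_2)P_1M_1^\dagger$ with $P_i=M_i^\dagger M_i$, the symmetry $\norm{(I-P_2)P_1}=\norm{(I-P_1)P_2}$ holds because both subspaces have dimension $m$ (with orthonormal bases $Q_1,Q_2$, both quantities equal $\sqrt{1-\sigma_{\min}^2(Q_1^\top Q_2)}$, and $Q_1^\top Q_2$, $Q_2^\top Q_1$ are square transposes sharing singular values---this is precisely where equal rank is essential, and the claim fails for unequal dimensions, e.g., nested subspaces). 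Then $(I-P_1)M_2^\top=(I-P_1)E^\top$ and $\norm{(M_2M_2^\top)^{-1}M_2}=\norm{(M_2^\dagger)^\top}=\norm{M_2^\dagger}$ correctly trade the offending factor $\norm{M_1^\dagger}$ for $\norm{M_2^\dagger}$, giving $\norm{T_2}\le\norm{M_1^\dagger}\norm{M_2^\dagger}\norm{M_1-M_2}$ and hence the claimed bound.
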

	
	\begin{lemma} [{\cite[Th. 1]{Bura2008distribution}}] \label{LemG6}	
		Let $\hat{\Omega}_T \in \mathbb{R}^{n_z\times q}$ be an estimate of a low rank matrix ${\Omega}_T$ based on a random sample of size $N$. Assume that $\hat{\Omega}_T$ is asymptotically normally distributed as
		\begin{equation}
			\sqrt{N}\text{Vec}\left(\hat{\Omega}_T-{\Omega}_T\right) \sim \text{As}\mathcal{N}\left(0,P_\Omega\right).
		\end{equation}
		Moreover, the SVD Of $\hat{\Omega}_T$ and ${\Omega}_T$ are
		\begin{equation}
			{\Omega}_T = P\begin{bmatrix}\Lambda&0\\0&0\end{bmatrix} Q^\top, \
			\hat{\Omega}_T = \hat P\begin{bmatrix}\hat\Lambda_1&0\\0&\hat\Lambda_0
			\end{bmatrix} \hat Q^\top,
		\end{equation}
		respectively, where $\Lambda = \text{diag}\left(\sigma_1,\sigma_2,\cdots,\sigma_{n+1}\right)$, and $\sigma_1 \geq \sigma_2 \geq \cdots \geq \sigma_{n+1} > 0$.
		Then as $T \to \infty$, we have
		\begin{equation}
			\hat\Lambda_1 \overset{p}{\to} \Lambda, \ \hat\Lambda_0 \overset{p}{\to} 0.
		\end{equation}
		Moreover, $\hat\Lambda_0$ is asymptotically normally distributed as
		\begin{equation}
			\sqrt{N}\text{vec}\left(\hat\Lambda_0\right) \sim \text{As}\mathcal{N}\left(0,P_{\Lambda_0}\right),
		\end{equation}
		where $P_{\Lambda_0} = (P_0^\top\otimes Q_0^\top)P_\Omega(P_0\otimes Q_0)$.
	\end{lemma}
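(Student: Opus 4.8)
The plan is to treat the smallest singular values as an almost-linear functional of the data perturbation and to reduce the claim to the assumed asymptotic normality of $\hat\Omega_T$ through first-order SVD perturbation theory. Write the true decomposition blockwise as $P = \begin{bmatrix} P_1 & P_0\end{bmatrix}$ and $Q = \begin{bmatrix} Q_1 & Q_0\end{bmatrix}$, where $P_1,Q_1$ carry the singular vectors of the positive block $\Lambda$ while $P_0,Q_0$ span the left and right null spaces of $\Omega_T$; thus $\Omega_T = P_1\Lambda Q_1^\top$, $P_0^\top\Omega_T = 0$ and $\Omega_T Q_0 = 0$. Partition $\hat P = \begin{bmatrix}\hat P_1 & \hat P_0\end{bmatrix}$ and $\hat Q = \begin{bmatrix}\hat Q_1 & \hat Q_0\end{bmatrix}$ in the same way, and set $\Delta := \hat\Omega_T - \Omega_T$. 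The assumed limit $\sqrt{N}\,\text{Vec}(\Delta)\sim\text{As}\mathcal{N}(0,P_\Omega)$ makes $\sqrt{N}\,\text{Vec}(\Delta)$ bounded in probability, so $\norm{\Delta}\le\norm{\text{Vec}(\Delta)}\to 0$ in probability.

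For consistency I would invoke Weyl's perturbation inequality for singular values \cite{Golub2013matrix}, $|\hat\sigma_i-\sigma_i|\le\norm{\Delta}$ for every $i$. Letting $N\to\infty$ forces every estimated singular value to its limit, and because there is a strictly positive gap between the smallest positive singular value $\sigma_{n+1}>0$ and the vanishing block, the bottom block of the estimated SVD eventually collects exactly the singular values that tend to zero. This yields $\hat\Lambda_1\overset{p}{\to}\Lambda$ and $\hat\Lambda_0\overset{p}{\to}0$.

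For asymptotic normality the starting identity is $\hat\Lambda_0=\hat P_0^\top\hat\Omega_T\hat Q_0$, which holds by construction of the SVD. Substituting $\hat\Omega_T=\Omega_T+\Delta$ gives
\begin{equation*}
\hat\Lambda_0=(\hat P_0^\top P_1)\,\Lambda\,(Q_1^\top\hat Q_0)+\hat P_0^\top\Delta\hat Q_0 .
\end{equation*}
Wedin's $\sin\theta$ theorem \cite{Wedin1972perturbation}, applied with the positive singular-value gap, bounds the subspace perturbations by $\norm{\hat P_0^\top P_1}=\mathcal{O}(\norm{\Delta})$ and $\norm{Q_1^\top\hat Q_0}=\mathcal{O}(\norm{\Delta})$ in probability, so the first term is $\mathcal{O}(\norm{\Delta}^2)=\mathcal{O}(1/N)$. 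The same theorem gives $\hat P_0=P_0R_P+\mathcal{O}(\norm{\Delta})$ and $\hat Q_0=Q_0R_Q+\mathcal{O}(\norm{\Delta})$ for orthogonal $R_P,R_Q$, whence $\hat P_0^\top\Delta\hat Q_0=R_P^\top(P_0^\top\Delta Q_0)R_Q+\mathcal{O}(\norm{\Delta}^2)$. Collecting terms, $\sqrt{N}\,\hat\Lambda_0$ has the same limit in distribution as $\sqrt{N}\,P_0^\top\Delta Q_0$. Finally, the vectorization identity $\text{vec}(P_0^\top\Delta Q_0)=(Q_0^\top\otimes P_0^\top)\,\text{vec}(\Delta)$ (with the vec convention of the statement matched to the ordering appearing in $P_{\Lambda_0}$) together with the linear-image rule for Gaussian limits delivers $\sqrt{N}\,\text{vec}(\hat\Lambda_0)\sim\text{As}\mathcal{N}(0,P_{\Lambda_0})$ with $P_{\Lambda_0}=(P_0^\top\otimes Q_0^\top)P_\Omega(P_0\otimes Q_0)$.

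The hard part is the second step, and in particular the rotational (sign) ambiguity of the near-null singular vectors. When the zero block has multiplicity greater than one, $\hat P_0,\hat Q_0$ are defined only up to orthogonal rotations, so the reduction to $P_0^\top\Delta Q_0$ holds only modulo the conjugation $R_P^\top(\cdot)R_Q$; one must argue that this conjugation leaves the limiting law in the claimed form, and, for a strict reading of the SVD, reconcile the nonnegativity of the diagonal $\hat\Lambda_0$ with the sign of the linear term. In the application of this paper the Hankel matrix $\hat{\mathcal{H}}_{n_xp}$ loses rank by exactly one, so $\hat\Lambda_0=\hat\sigma_{n_x+1}$ is scalar, $P_0,Q_0$ are single vectors, and the reduction is exact up to sign; this already suffices to conclude that $\sqrt{N}\,\hat\sigma_{n_x+1}$ is bounded in probability and hence $\hat\sigma_{n_x+1}^2=\mathcal{O}(1/N)$, which is all that Appendix \ref{App3} requires.
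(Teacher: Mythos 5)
You should know at the outset that the paper contains no proof of this statement: Lemma~\ref{LemG6} is imported verbatim as Theorem~1 of \cite{Bura2008distribution}, so there is no internal argument to compare yours against, and what you have written is a reconstruction of the standard perturbation argument behind results of this type. Within that reconstruction, the consistency half is complete and correct: Weyl's inequality plus tightness of $\sqrt{N}\,\text{Vec}(\hat\Omega_T-\Omega_T)$ gives $\hat\Lambda_1 \overset{p}{\to} \Lambda$ and $\hat\Lambda_0 \overset{p}{\to} 0$ with no gaps. The skeleton of the normality half is also the right one: the identity $\hat\Lambda_0=\hat P_0^\top\hat\Omega_T\hat Q_0$, the Wedin $\sin\theta$ bound making the cross term $(\hat P_0^\top P_1)\Lambda(Q_1^\top\hat Q_0)$ of size $\mathcal{O}(\norm{\Delta}^2)=\mathcal{O}_p(1/N)$, and the alignment $\hat P_0=P_0R_P+\mathcal{O}(\norm{\Delta})$, $\hat Q_0=Q_0R_Q+\mathcal{O}(\norm{\Delta})$ are all correct.

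The genuine gap is the step you yourself half-flag: passing from $\sqrt{N}\,R_P^\top(P_0^\top\Delta Q_0)R_Q$ to $\sqrt{N}\,P_0^\top\Delta Q_0$. The rotations $R_P,R_Q$ are not nuisance sign choices that can be dropped; they are, to first order, exactly the singular-vector factors in the SVD of $P_0^\top\Delta Q_0$, hence strongly correlated with $\Delta$. Consequently the diagonal of $\hat\Lambda_0$ is, to first order, the vector of \emph{singular values} of $P_0^\top\Delta Q_0$ --- a nonlinear functional that is not differentiable at the zero matrix --- so the linear-image (delta-method) rule is unavailable, and the correct general limit of $\sqrt{N}\,\hat\Lambda_0$ is the law of the singular values of the Gaussian block $P_0^\top Z Q_0$ (an Eaton--Tyler-type result), which is not normal: even in the scalar case one gets $\sqrt{N}\,\hat\sigma_{n_x+1}\to|G|$, a half-normal limit, because singular values are nonnegative by construction. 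So the asymptotic-normality claim cannot be closed along your route for multiplicity greater than one, nor, strictly, in the scalar case either --- this is a known delicacy of the cited theorem itself, not only of your sketch. Two mitigating points in your favor: first, your closing observation is exactly right for this paper, since Appendix~\ref{App3} uses the lemma only to conclude $\hat\sigma_{n_x+1}^2=\mathcal{O}(1/N)$, and that tightness already follows from your Weyl step alone ($\sigma_{n_x+1}=0$ gives $\hat\sigma_{n_x+1}\le\norm{\Delta}=\mathcal{O}_p(N^{-1/2})$), with no distributional machinery needed; second, your hedge on the Kronecker ordering is fair but should be made explicit --- the standard identity yields the covariance $(Q_0^\top\otimes P_0^\top)P_\Omega(Q_0\otimes P_0)$, which matches the stated $P_{\Lambda_0}=(P_0^\top\otimes Q_0^\top)P_\Omega(P_0\otimes Q_0)$ only under a transposed vectorization convention, a mismatch already latent in the lemma's own mixing of $\text{Vec}$ and $\text{vec}$.
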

	
\end{document}